\documentclass[twocolumn,10pt]{IEEEtran}

\usepackage{amssymb}
\usepackage{amsmath}
\usepackage{amsthm}
\usepackage{graphicx}
\usepackage{comment}
\usepackage{xcolor}
\usepackage{subfigure}
\usepackage{hyperref}

\usepackage{algorithm}
\usepackage{algorithmic}

\usepackage[acronym]{glossaries}
\loadglsentries{glossary}

\newtheorem{proposition}{Proposition}

\begin{document}

\title{Physics-Compliant Modeling and Scaling Laws of\\Multi-RIS Aided MIMO Systems}

\author{Matteo~Nerini,~\IEEEmembership{Member,~IEEE},
        Gabriele~Gradoni,~\IEEEmembership{Member,~IEEE},
        Bruno~Clerckx,~\IEEEmembership{Fellow,~IEEE}

\thanks{Part of this work has been presented in the 19th European Conference on Antennas and Propagation, 2025 \cite{ner24}.}
\thanks{This work was supported in part by UKRI under Grant EP/Y004086/1, EP/X040569/1, EP/Y037197/1, EP/X04047X/1, and EP/Y037243/1.}
\thanks{Matteo Nerini and Bruno Clerckx are with the Department of Electrical and Electronic Engineering, Imperial College London, SW7 2AZ London, U.K. (e-mail: m.nerini20@imperial.ac.uk; b.clerckx@imperial.ac.uk).}
\thanks{Gabriele Gradoni is with the Institute for Communication Systems, University of Surrey, GU2 7XH Guildford, U.K. (e-mail: g.gradoni@surrey.ac.uk).}}

\maketitle

\begin{abstract}
Reconfigurable intelligent surface (RIS) enables the control of wireless channels to improve coverage.
To further extend coverage, multi-RIS aided systems have been explored, where multiple RISs steer the signal via a multi-hop path.
However, deriving a physics-compliant channel model for multi-RIS aided systems is still an open problem.
In this study, we fill this gap by modeling multi-RIS aided systems through multiport network theory, and deriving a channel model accounting for impedance mismatch, mutual coupling, and structural scattering.
The derived physics-compliant model differs from the model widely used in literature, which omits the RIS structural scattering.
To quantify this difference, we derive the channel gain scaling laws of the two models under line-of-sight (LoS) and multipath channels.
Theoretical insights, validated by numerical results, show an important discrepancy between the physics-compliant and the widely used models, increasing with the number of RISs and multipath richness.
In a multi-hop system aided by four 128-element RISs with multipath channels, optimizing the RISs using the widely used model and applying their solutions to the physics-compliant model achieves only 7\% of the maximum channel gain.
This highlights how severely mismatched channel models can be, calling for more accurate models in communication theory.
\end{abstract}

\glsresetall

\begin{IEEEkeywords}
Multi-RIS, multiport network theory, reconfigurable intelligent surface (RIS).
\end{IEEEkeywords}

\section{Introduction}
\label{sec:intro}

Reconfigurable intelligent surface (RIS) has emerged as a technology enabling dynamic control over the \gls{em} propagation environment in wireless networks \cite{wu21}.
RIS technology leverages surfaces made of elements with programmable scattering properties to manipulate impinging \gls{em} signals, thereby enhancing the channel strength and extending coverage.
While most of the literature on RIS focuses on systems aided by a single RIS, multi-RIS aided systems, also known as multi-hop RIS-aided systems, have attracted attention as they can further enhance coverage and overcome multiple obstacles \cite[Section~8.3]{etsi25}.
Initial studies considered the optimization and power scaling analysis of double-RIS aided systems \cite{han20,zhe21-1}, where the signal reaches the receiver following a double-hop path.
The capacity maximization problem for double-RIS aided systems has been tackled in \cite{han22,an23}.
In addition, a channel estimation protocol for double-RIS systems has been presented in \cite{zhe21-2}, while a geometry-based stochastic channel model has been proposed in \cite{qi23}.
As a generalization of double-RIS aided systems, multi-RIS aided systems have been studied, where multiple cooperative RISs are deployed to drive the \gls{em} signal towards the intended location through a multi-hop path \cite{mei21,hua21}.
In these systems, multiple RISs can boost the channel strength in harsh propagation conditions, with consequent coverage extension \cite{mei22-2}.
Other benefits of multi-RIS aided systems include the ability to artificially create multipath beams in single-user systems \cite{mei22-1}, and to maximize the sum rate in multi-user systems by mitigating interference \cite{ma22,ngu23}.

Works \cite{han20}-\cite{ngu23} deployed multiple reflective RISs to extend the coverage, each of them covering half-space and characterized by a diagonal phase shift matrix.
Given its mathematical constraint, we refer to this conventional RIS architecture as diagonal RIS (D-RIS).
Besides, to further extend the coverage of RIS-aided systems, more flexible RIS architectures have been proposed under the umbrella term of beyond diagonal RIS (BD-RIS), which are characterized by a matrix allowed to have non-zero off-diagonal entries \cite{li23}.
BD-RIS offers two ways to improve the coverage.
First, reflective BD-RISs can improve coverage and performance through their advanced flexibility, allowing for more versatile manipulation of the signal.
Efficient reflective BD-RIS architectures have been proposed, such as fully- and group-connected RISs \cite{she20,ner22}, and tree- and forest-connected RISs \cite{ner23-1}.
Second, BD-RIS enables the signal impinging in one element to be irradiated by other elements, allowing the transmission of the signal through the RIS.
Thus, BD-RIS architectures with transmissive capabilities have been proposed to reach a 360$^{\circ}$ coverage, such as by using \gls{star-ris} \cite{xu21}, and the more general hybrid BD-RIS and multi-sector BD-RIS architectures \cite{li22-1,li22-2}.
In \cite{jav24}, multiple transmissive RISs have been used to serve mobile users through a multi-hop path.

Accurate modeling of RIS-aided wireless channels is crucial for designing and optimizing RIS-aided systems, including a single or multiple RISs implemented through D-RIS or BD-RIS architectures \cite{per23,del25}.
To rigorously model wireless channels in the presence of a single RIS, multiport network theory has been successfully utilized \cite{gra21,wil22,she20}.
Specifically, previous works used multiport network theory to derive physics-compliant RIS-aided channel models accounting for the impedance mismatch and mutual coupling effects at the transmitter, receiver, and RIS.
Different models have been proposed based on three equivalent formalisms, i.e., impedance (or $Z$) parameters \cite{gra21}, admittance (or $Y$) parameters \cite{wil22}, and scattering (or $S$) parameters \cite{she20}.
The relationship between impedance and scattering parameters has been more recently analyzed in \cite{nos24-1,nos24-2,li24,abr23}, and a universal framework has been derived in \cite{ner23} highlighting the connection between impedance, admittance, and scattering parameters.

While substantial effort has been devoted to system-level optimization of multi-RIS aided systems, the rigorous modeling of multi-RIS aided channels is still an open issue.
Channel models conventionally used in previous works on multi-RIS have not been derived by first physics principles, hence they are hard to validate and their limit of validity is unclear.
To fill this gap, in this study, we model the channel of multi-RIS aided \gls{mimo} systems through multiport network theory.
We derive the expression of a physics-compliant channel model and the scaling law of the achievable channel gain, under \gls{los} and multipath channels.

\textit{Contributions}:
The contributions of this study are as follows.

\textit{First}, we derive a physics-compliant channel model for multi-RIS aided \gls{mimo} systems by using multiport network theory, clarifying its underlying assumptions.
The derived channel model accounts for the impedance mismatch and mutual coupling at the transmitter, receiver, and RISs, the effects of all the channels between the transmitter/receiver and the RISs, and the structural scattering of the RISs\footnote{In antenna theory, the structural scattering is a component of the field scattered by an antenna \cite[Chapter 2]{bal16}. In the context of RIS, the structural scattering of a RIS results in a specular reflection at the RIS independent from the RIS reconfiguration \cite{abr23}.}.
As these effects are commonly neglected in related literature, our channel model is crucial to better understand multi-hop communication links enabled by multi-RIS aided systems.

\textit{Second}, we simplify the derived channel model by assuming perfect matching and no mutual coupling at the transmitter, receiver, and RISs for the sake of tractability.
The obtained simplified channel model is aligned with the model widely used in literature, while it introduces a slight variation to incorporate the structural scattering of the RISs, which is typically ignored in the literature.

\textit{Third}, we analyze the effect of hybrid and multi-sector BD-RISs working in transmissive mode on the multi-RIS aided channel expression.
Interestingly, we show that the structural scattering of RISs working in transmissive mode does not impact the channel model.
Thus, the widely used channel model is physics-compliant in the case of multi-RIS aided systems where all the RISs are used in transmissive mode.

\textit{Fourth}, we compare the derived physics-compliant channel model with the widely used one, when all the RISs are used in reflective mode and assuming \gls{los} channels.
To this end, we provide the scaling laws of the achievable channel gains for the two models.
Theoretical insights, supported by numerical simulations, show that the relative difference between the two channel gains grows with the number of RISs and decreases with the number of RIS elements.
In a system aided by four RISs with 128 elements each, this difference is higher than 80\%, and by optimizing the RISs based on the widely used model it can be achieved only 56\% of the maximum channel gain of the physics-compliant model.

\textit{Fifth}, we analyze the physics-compliant and widely used channel models under multipath channels.
Specifically, we propose an algorithm to maximize the channel gain by optimizing D-RIS and BD-RIS architectures, and derive closed-form upper bounds on the channel gains.
Numerical results, supported by theoretical intuition, show that the discrepancy between the physics-compliant and widely used channel models is more severe under multipath channels compared to \gls{los} channels.
In a system aided by four RISs with 128 elements each, the relative difference between the two channel gains is higher than 1000\% with Rayleigh channels.
Consequently, by optimizing the RISs based on the widely used model, only 7\% of the maximum channel gain of the physics-compliant model can be achieved.
An important takeaway message from this work to the communication society is to encourage the integration of more accurate, physics-based, channel models derived from first \gls{em} principles into communication theoretic analysis.

\textit{Organization}:
In Section~\ref{sec:multiport}, we characterize multi-RIS aided systems with multiport network theory.
In Section~\ref{sec:model}, we derive a physics-compliant model of multi-RIS aided systems accounting for impedance mismatch and mutual coupling.
In Section~\ref{sec:model-no-MC}, we simplify the model and show that it differs from the widely used channel model.
In Section~\ref{sec:model-multi-sector}, we model multi-RIS aided systems in the presence of both reflective and transmissive RISs.
In Sections~\ref{sec:gain-los} and \ref{sec:gain-multipath}, we maximize the channel gain of the physics-compliant and widely used models, under \gls{los} and multipath channels, respectively.
In Section~\ref{sec:results}, we provide numerical results to validate the theoretical insights.
In Section~\ref{sec:conclusion}, we conclude this work.
The simulation code is available at \href{https://github.com/matteonerini/multi-ris}{https://github.com/matteonerini/multi-ris}.

\textit{Notation}:
Vectors and matrices are denoted with bold lower and bold upper letters, respectively.
Scalars are represented with letters not in bold font.
$\Re\{a\}$, $\Im\{a\}$, $\vert a\vert$, and arg$(a)$ refer to the real part, imaginary part, absolute value, and phase of a complex scalar $a$, respectively.
$\mathbf{a}^T$, $[\mathbf{a}]_{i}$, and $\Vert\mathbf{a}\Vert$ refer to the transpose, $i$th element, and $l_{2}$-norm of a vector $\mathbf{a}$, respectively.
$\mathbf{A}^T$, $[\mathbf{A}]_{i,j}$, $\Vert\mathbf{A}\Vert$, and $\Vert\mathbf{A}\Vert_F$ refer to the transpose, $(i,j)$th element, spectral norm, and Frobenius norm of a matrix $\mathbf{A}$, respectively.
$\mathbb{R}$ and $\mathbb{C}$ denote real and complex number sets, respectively.
$j=\sqrt{-1}$ denotes the imaginary unit.
$\mathbf{0}$ and $\mathbf{I}$ denote an all-zero matrix and an identity matrix with appropriate dimensions, respectively.
$\mathcal{CN}(\mathbf{0},\mathbf{I})$ denotes the distribution of a circularly symmetric complex Gaussian random vector with mean vector $\mathbf{0}$ and covariance matrix $\mathbf{I}$ and $\sim$ stands for ``distributed as''.
diag$(a_1,\ldots,a_N)$ refers to a diagonal matrix with diagonal elements being $a_1,\ldots,a_N$.
diag$(\mathbf{A}_{1},\ldots,\mathbf{A}_{N})$ refers to a block diagonal matrix with blocks being $\mathbf{A}_{1},\ldots,\mathbf{A}_{N}$.

\section{Multiport Network Theory}
\label{sec:multiport}

Consider a \gls{mimo} communication system between an $N_T$-antenna transmitter and an $N_R$-antenna receiver aided by $L$ RISs, each having $N_I$ elements, as represented in Fig.~\ref{fig:diagram}.
Following previous literature \cite{she20,gra21,wil22,nos24-1,nos24-2,li24,abr23,ner23}, we model the wireless channel as an $N$-port network, with $N=N_T+LN_I+N_R$.

\begin{figure}[t]
\centering
\includegraphics[width=0.48\textwidth]{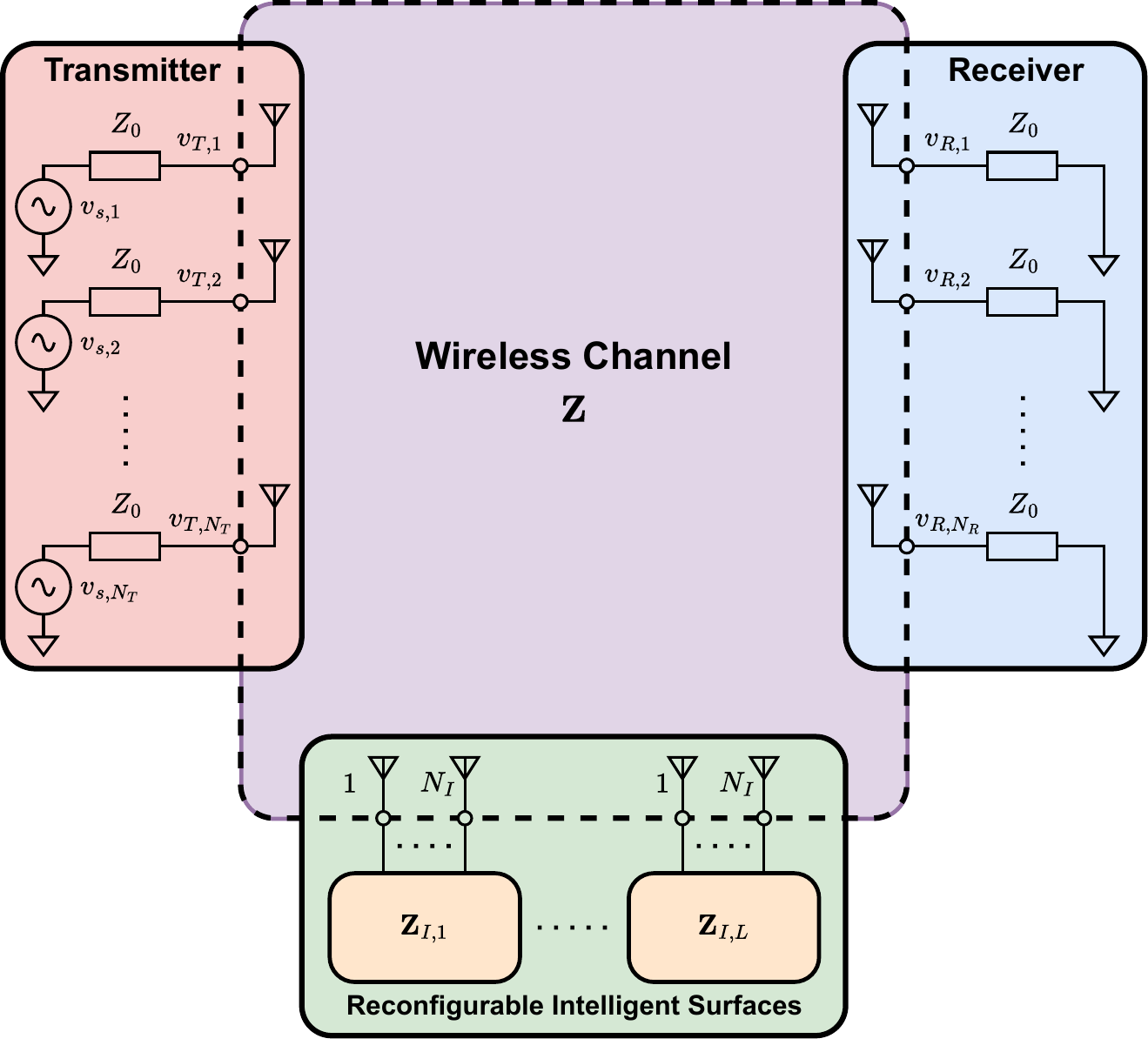}
\caption{Multi-RIS aided MIMO system modeled through multiport network theory.}
\label{fig:diagram}
\end{figure}

According to multiport network theory \cite[Chapter 4]{poz11}, the $N$-port network modeling the wireless channel can be characterized by its impedance matrix $\mathbf{Z}\in\mathbb{C}^{N\times N}$, which can be partitioned as
\begin{equation}
\mathbf{Z}=
\begin{bmatrix}
\mathbf{Z}_{TT} & \mathbf{Z}_{TI} & \mathbf{Z}_{TR}\\
\mathbf{Z}_{IT} & \mathbf{Z}_{II} & \mathbf{Z}_{IR}\\
\mathbf{Z}_{RT} & \mathbf{Z}_{RI} & \mathbf{Z}_{RR}
\end{bmatrix}.\label{eq:Z}
\end{equation}
In this partition, we identify the antennas at the transmitter, RISs, and receiver with subscripts $T$, $I$, and $R$, respectively.
Accordingly, $\mathbf{Z}_{XX}$ is the impedance matrix of the antenna array at $X$, where $X\in\{T,I,R\}$, and $\mathbf{Z}_{XY}$ is the transmission impedance matrix from $Y$ to $X$, where $X,Y\in\{T,I,R\}$.
More in detail, $\mathbf{Z}_{TT}\in\mathbb{C}^{N_T\times N_T}$ and $\mathbf{Z}_{RR}\in\mathbb{C}^{N_R\times N_R}$ refer to the impedance matrices of the antenna arrays at the transmitter and receiver, respectively, whose diagonal entries refer to the antenna self-impedance while the off-diagonal entries refer to antenna mutual coupling.
The matrix $\mathbf{Z}_{II}\in\mathbb{C}^{LN_I\times LN_I}$ can be partitioned as
\begin{equation}
\mathbf{Z}_{II}=
\begin{bmatrix}
\mathbf{Z}_{II,1} & \mathbf{Z}_{1,2} & \cdots & \mathbf{Z}_{1,L}\\
\mathbf{Z}_{2,1} & \mathbf{Z}_{II,2} & \cdots & \mathbf{Z}_{2,L}\\
\vdots & \vdots & \ddots & \vdots\\
\mathbf{Z}_{L,1} & \mathbf{Z}_{L,2} & \cdots & \mathbf{Z}_{II,L}
\end{bmatrix},
\end{equation}
where $\mathbf{Z}_{II,\ell}\in\mathbb{C}^{N_I\times N_I}$ refer to the impedance matrix of the antenna array at the $\ell$th RIS, whose diagonal entries refer to the antenna self-impedance while the off-diagonal entries refer to antenna mutual coupling, and $\mathbf{Z}_{i,j}\in\mathbb{C}^{N_I\times N_I}$ refer to the transmission impedance matrix from the $j$th RIS to the $i$th RIS.
Accordingly, $\mathbf{Z}_{IT}\in\mathbb{C}^{LN_I\times N_T}$ can be partitioned as
\begin{equation}
\mathbf{Z}_{IT}=\left[\mathbf{Z}_{IT,1}^T,\mathbf{Z}_{IT,2}^T,\ldots,\mathbf{Z}_{IT,L}^T\right]^T,
\end{equation}
where $\mathbf{Z}_{IT,\ell}\in\mathbb{C}^{N_I\times N_T}$ is the transmission impedance matrix from the transmitter to the $\ell$th RIS.
$\mathbf{Z}_{RI}\in\mathbb{C}^{N_R\times LN_I}$ can be partitioned as
\begin{equation}
\mathbf{Z}_{RI}=\left[\mathbf{Z}_{RI,1},\mathbf{Z}_{RI,2},\ldots,\mathbf{Z}_{RI,L}\right],
\end{equation}
where $\mathbf{Z}_{RI,\ell}\in\mathbb{C}^{N_R\times N_I}$ is the transmission impedance matrix from the $\ell$th RIS to the receiver.
$\mathbf{Z}_{RT}\in\mathbb{C}^{N_R\times N_T}$ refer to the transmission impedance matrix from the transmitter to the receiver.
Similarly, $\mathbf{Z}_{TI}\in\mathbb{C}^{N_T\times LN_I}$, $\mathbf{Z}_{IR}\in\mathbb{C}^{LN_I\times N_R}$, and $\mathbf{Z}_{TR}\in\mathbb{C}^{N_T\times N_R}$ refer to the transmission impedance matrices from the RISs to transmitter, and from the receiver to RISs, and from the receiver to transmitter, respectively.
For reciprocal channels, it holds $\mathbf{Z}_{TR}=\mathbf{Z}_{RT}^T$, $\mathbf{Z}_{TI}=\mathbf{Z}_{IT}^T$, and $\mathbf{Z}_{IR}=\mathbf{Z}_{RI}^T$.

At the transmitter, the $n_T$th transmitting antenna is connected in series with a source voltage $v_{s,n_T}\in\mathbb{C}$ and a source impedance $Z_0$, e.g., set to $Z_0=50\:\Omega$, and we denote the voltage at the $n_T$th transmitting antenna as $v_{T,n_T}\in\mathbb{C}$, for $n_T=1,\ldots,N_T$.
At the $\ell$th RIS, the $N_I$ antennas are connected to an $N_I$-port reconfigurable impedance network with impedance matrix denoted as $\mathbf{Z}_{I,\ell}\in\mathbb{C}^{N_I\times N_I}$, for $\ell=1,\ldots,L$.
At the receiver, all the receiving antennas are connected in series with a load impedance $Z_0$, and we denote the voltage at the $n_R$th receiving antenna as $v_{R,n_R}\in\mathbb{C}$, for $n_R=1,\ldots,N_R$, as shown in Fig.~\ref{fig:diagram}.

In this study, our goal is to characterize the expression of the channel matrix $\mathbf{H}\in\mathbb{C}^{N_R\times N_T}$ relating the transmitted signal $\mathbf{v}_T=[v_{T,1},\ldots,v_{T,N_T}]^T\in\mathbb{C}^{N_T\times 1}$ and the received signal $\mathbf{v}_R=[v_{R,1},\ldots,v_{R,N_R}]^T\in\mathbb{C}^{N_R\times 1}$ through
\begin{equation}
\mathbf{v}_R=\mathbf{H}\mathbf{v}_T,
\end{equation}
when the communication is aided by the $L$ RISs.

\section{Multi-RIS Aided Channel Model}
\label{sec:model}

As highlighted in previous work, the channel model for $\mathbf{H}$ derived through multiport network theory with no assumptions is difficult to interpret \cite{she20,gra21,wil22,ner23}.
For this reason, we consider the following assumption, commonly accepted in the literature.
\begin{enumerate}
\item We assume sufficiently large transmission distances such that the currents at the transmitter/transmitter/RISs are independent of the currents at the RISs/receiver/receiver, respectively, also known as the unilateral approximation \cite{ivr10}. This allows us to neglect the feedback links, i.e., we can consider $\mathbf{Z}_{TI}=\mathbf{0}$, $\mathbf{Z}_{TR}=\mathbf{0}$, and $\mathbf{Z}_{IR}=\mathbf{0}$.
\end{enumerate}
With this assumption, following the steps in \cite{gra21,ner23}, it is possible to show that the channel $\mathbf{H}$ is given by
\begin{multline}
\mathbf{H}=Z_0\left(Z_0\mathbf{I}+\mathbf{Z}_{RR}\right)^{-1}\\
\times\left(\mathbf{Z}_{RT}-\mathbf{Z}_{RI}\left(\mathbf{Z}_I+\mathbf{Z}_{II}\right)^{-1}\mathbf{Z}_{IT}\right)\mathbf{Z}_{TT}^{-1},\label{eq:HZ1}
\end{multline}
where $\mathbf{Z}_I\in\mathbb{C}^{LN_I\times LN_I}$ is a block diagonal matrix defined as
\begin{equation}
\mathbf{Z}_{I}=\text{diag}\left(\mathbf{Z}_{I,1},\mathbf{Z}_{I,2},\ldots,\mathbf{Z}_{I,L}\right),
\end{equation}
having in the $\ell$th block the impedance matrix of the reconfigurable impedance network at the $\ell$th RIS \cite{gra21,ner23}.

Since the impact of the reconfigurable impedance matrices of the RISs $\mathbf{Z}_{I,\ell}$ is not apparent in \eqref{eq:HZ1} due to the matrix inversion operation, it is necessary to further simplify this model.
To this end, we consider with no loss of generality that the signal sent by the transmitter reaches the receiver by flowing from the $1$st RIS to the $L$th RIS.
Thus, we make the following two additional assumptions to obtain an interpretable channel model.
\begin{enumerate}
\setcounter{enumi}{1}
\item We assume large enough distances between the RISs such that the currents at the $i$th RIS are independent of the currents at the $j$th RIS, with $i<j$.
As in assumption~1), this is referred to as the unilateral approximation and allows us to set to zero the feedback channels between the RISs, i.e., $\mathbf{Z}_{i,j}=\mathbf{0}$, $\forall i<j$ \cite{ivr10}.
\item We assume that the $\ell$th RIS is only connected to the transmitter, the receiver, the $(\ell-1)$th RIS (if $\ell\neq1$), and the $(\ell+1)$th RIS (if $\ell\neq L$), for $\ell=1,\ldots,L$.
In other words, we assume that the channel between the $i$th and $j$th RIS is completely obstructed when $i-j\geq2$, i.e., $\mathbf{Z}_{i,j}=\mathbf{0}$ if $i-j\geq2$.
This allows us to study the multi-hop cascaded channel formed by the $L$ RISs, as in related works \cite{han20}-\cite{ngu23}.
\end{enumerate}
Note that assumption 2) is needed to ensure that the currents at the $\ell$th RIS do not impact the currents at the previous RIS in the cascade, i.e., the $(\ell-1)$th RIS, for $\ell=2,\ldots,L$.
This assumption, named unilateral approximation, is also commonly made in \gls{mimo} system, where the currents at the receiver are assumed not to impact the currents at the transmitter \cite{ivr10}, and it is satisfied in practical wireless systems.
Besides, the rationale behind assumption 3) is that a routing algorithm has already selected $L$ RISs to help the communication between the transmitter and receiver, out of all the RISs available in the environment, e.g., as proposed in \cite{mei21,mei22-1}.
For more information on where to deploy the RISs in a multi-RIS environment, the reader is referred to \cite{li23-2}.

With the two additional assumptions 2) and 3), the matrix $\mathbf{Z}_{II}$ simplifies as
\begin{equation}
\mathbf{Z}_{II}=
\begin{bmatrix}
\mathbf{Z}_{II,1} &  &  &  & \mathbf{0}\\
\mathbf{Z}_{2,1} & \mathbf{Z}_{II,2} &  &  & \\
 & \mathbf{Z}_{3,2} & \ddots &  & \\
 &  & \ddots & \mathbf{Z}_{II,L-1} & \\
\mathbf{0} &  &  & \mathbf{Z}_{L,L-1} & \mathbf{Z}_{II,L}\\
\end{bmatrix},\label{eq:ZII}
\end{equation}
having non-zero block matrices only in the diagonal and subdiagonal.
Note that the blocks in the supradiagonal of $\mathbf{Z}_{II}$ are set to zero following assumption 2).
To simplify the channel model in \eqref{eq:HZ1} accordingly, we introduce the following proposition.

\begin{proposition}
Consider a square block matrix $\mathbf{M}\in\mathbb{C}^{LN\times LN}$ having square matrices $\mathbf{D}_\ell\in\mathbb{C}^{N\times N}$ in the diagonal, for $\ell=1,\ldots,L$, and square matrices $\mathbf{S}_{\ell,\ell-1}\in\mathbb{C}^{N\times N}$ in the subdiagonal, for $\ell=2,\ldots,L$, with all other blocks being zero matrices, i.e.,
\begin{equation}
\mathbf{M}=
\begin{bmatrix}
\mathbf{D}_1 &  &  &  & \mathbf{0}\\
\mathbf{S}_{2,1} & \mathbf{D}_{2} &  &  & \\
 & \mathbf{S}_{3,2} & \ddots &  & \\
 &  & \ddots & \mathbf{D}_{L-1} & \\
\mathbf{0} &  &  & \mathbf{S}_{L,L-1} & \mathbf{D}_L\\
\end{bmatrix}.\label{eq:M}
\end{equation}
If all $\mathbf{D}_\ell$ are invertible, the inverse of $\mathbf{M}$, denoted as $\mathbf{N}=\mathbf{M}^{-1}\in\mathbb{C}^{LN\times LN}$, is a square block matrix partitioned as
\begin{equation}
\mathbf{N}=
\begin{bmatrix}
\mathbf{N}_{1,1} & \cdots & \mathbf{N}_{1,L}\\
\vdots & \ddots & \vdots\\
\mathbf{N}_{L,1} & \cdots & \mathbf{N}_{L,L}
\end{bmatrix},
\end{equation}
where $\mathbf{N}_{i,j}\in\mathbb{C}^{N\times N}$ is given by\footnote{Note that the index $k$ in the product in \eqref{eq:N} decreases from $i-1$ to $j$ since $i>j$.
The decreasing order matters because of the non-commutativity of matrix multiplication.}
\begin{equation}
\mathbf{N}_{i,j}=
\begin{cases}
\mathbf{0} & \text{if }i<j\\
\mathbf{D}_{i}^{-1} & \text{if }i=j\\
\left(-1\right)^{i-j}\mathbf{D}_{i}^{-1}
\prod_{k=i-1}^{j}\left(\mathbf{S}_{k+1,k}\mathbf{D}_{k}^{-1}\right) & \text{if }i>j\\
\end{cases}.\label{eq:N}
\end{equation}
\label{pro}
\end{proposition}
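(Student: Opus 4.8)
The plan is to prove the claim by directly verifying that the proposed $\mathbf{N}$ is a genuine inverse, i.e., that $\mathbf{M}\mathbf{N}=\mathbf{I}$. Since $\mathbf{M}$ is block lower triangular with invertible diagonal blocks, it is itself invertible and its inverse is unique, so establishing $\mathbf{M}\mathbf{N}=\mathbf{I}$ is sufficient and one need not separately check $\mathbf{N}\mathbf{M}=\mathbf{I}$. Because the only nonzero blocks of $\mathbf{M}$ are $\mathbf{D}_i$ on the diagonal and $\mathbf{S}_{i,i-1}$ on the subdiagonal, block-row $i$ of the product collapses to $\mathbf{D}_i\mathbf{N}_{i,j}+\mathbf{S}_{i,i-1}\mathbf{N}_{i-1,j}$, where the second term is dropped when $i=1$. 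The whole argument then reduces to showing this expression equals $\mathbf{I}$ for $i=j$ and $\mathbf{0}$ otherwise.

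First I would treat the three regimes of \eqref{eq:N} separately. For $i<j$ both $\mathbf{N}_{i,j}$ and $\mathbf{N}_{i-1,j}$ vanish, so the off-diagonal block is zero. For $i=j$ the term $\mathbf{N}_{i-1,j}$ still vanishes (as $i-1<j$) while $\mathbf{D}_i\mathbf{N}_{i,i}=\mathbf{D}_i\mathbf{D}_i^{-1}=\mathbf{I}$, giving the required identity block. The only substantial case is $i>j$, where the key observation is that the closed form in \eqref{eq:N} obeys the one-step recursion $\mathbf{N}_{i,j}=-\mathbf{D}_i^{-1}\mathbf{S}_{i,i-1}\mathbf{N}_{i-1,j}$; this follows by peeling the $k=i-1$ factor $\mathbf{S}_{i,i-1}\mathbf{D}_{i-1}^{-1}$ off the front of the decreasing product and absorbing the sign change $(-1)^{i-j}=-(-1)^{i-1-j}$. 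Substituting this recursion into $\mathbf{D}_i\mathbf{N}_{i,j}+\mathbf{S}_{i,i-1}\mathbf{N}_{i-1,j}$ makes the two terms cancel, so every strictly lower block is zero as well.

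I expect the main obstacle to be purely bookkeeping: keeping the noncommutative product in \eqref{eq:N} in its correct decreasing order and tracking the parity of the $(-1)^{i-j}$ prefactor when the index $i$ is decremented, especially at the boundary subcase $i=j+1$, where the product defining $\mathbf{N}_{i-1,j}=\mathbf{N}_{j,j}$ degenerates to the single factor $\mathbf{D}_j^{-1}$. Verifying that this boundary case still yields exact cancellation is the one place where a careless sign or an order slip would break the argument.

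As an alternative that sidesteps the index juggling, I could factor $\mathbf{M}=\mathbf{D}\,(\mathbf{I}+\mathbf{D}^{-1}\mathbf{L})$, where $\mathbf{D}=\mathrm{diag}(\mathbf{D}_1,\ldots,\mathbf{D}_L)$ and $\mathbf{L}$ collects the subdiagonal blocks $\mathbf{S}_{\ell,\ell-1}$. The matrix $\mathbf{P}=\mathbf{D}^{-1}\mathbf{L}$ is strictly block subdiagonal, hence nilpotent with $\mathbf{P}^{L}=\mathbf{0}$, so the Neumann series $(\mathbf{I}+\mathbf{P})^{-1}=\sum_{m=0}^{L-1}(-\mathbf{P})^{m}$ terminates. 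Computing that the only nonzero blocks of $\mathbf{P}^{m}$ are the products of $\mathbf{D}_k^{-1}\mathbf{S}_{k,k-1}$ taken in decreasing order of $k$ from $i$ down to $i-m+1$, and then forming $\mathbf{N}=(\mathbf{I}+\mathbf{P})^{-1}\mathbf{D}^{-1}$, reproduces \eqref{eq:N} directly: the exponent $m=i-j$ selects the surviving term and the sign $(-1)^{m}$ matches $(-1)^{i-j}$. I would present the direct verification as the main proof and keep this Neumann-series viewpoint as the conceptual explanation of where the product and the alternating sign originate.
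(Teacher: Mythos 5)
Your proposal is correct and follows essentially the same route as the paper: a direct block-wise verification that $\mathbf{M}\mathbf{N}=\mathbf{I}$, using the fact that block row $i$ of the product reduces to $\mathbf{D}_i\mathbf{N}_{i,j}$ for $i=1$ and $\mathbf{S}_{i,i-1}\mathbf{N}_{i-1,j}+\mathbf{D}_i\mathbf{N}_{i,j}$ for $i\geq 2$, with the case analysis (including the boundary $i=j+1$ and the recursion $\mathbf{N}_{i,j}=-\mathbf{D}_i^{-1}\mathbf{S}_{i,i-1}\mathbf{N}_{i-1,j}$) matching the paper's verification. Your supplementary Neumann-series factorization $\mathbf{N}=\bigl(\sum_{m=0}^{L-1}(-\mathbf{D}^{-1}\mathbf{L})^m\bigr)\mathbf{D}^{-1}$ is a nice conceptual addition not present in the paper, but since you present the direct check as the main argument, the two proofs coincide in substance.
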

\begin{proof}
To prove the proposition, it is sufficient to show that the matrix product $\mathbf{P}=\mathbf{M}\mathbf{N}\in\mathbb{C}^{LN\times LN}$, partitioned as
\begin{equation}
\mathbf{P}=
\begin{bmatrix}
\mathbf{P}_{1,1} & \cdots & \mathbf{P}_{1,L}\\
\vdots & \ddots & \vdots\\
\mathbf{P}_{L,1} & \cdots & \mathbf{P}_{L,L}
\end{bmatrix},
\end{equation}
with $\mathbf{P}_{i,j}\in\mathbb{C}^{N\times N}$, is the identity matrix, i.e., $\mathbf{P}_{i,j}=\mathbf{I}$, for $i=j$, and $\mathbf{P}_{i,j}=\mathbf{0}$, for $i\neq j$.
This can be directly shown by noticing that
\begin{equation}
\mathbf{P}_{1,j}=\mathbf{D}_{1}\mathbf{N}_{1,j},
\end{equation}
for $j=1,\dots,L$, and
\begin{equation}
\mathbf{P}_{i,j}=\mathbf{S}_{i,i-1}\mathbf{N}_{i-1,j}+\mathbf{D}_{i}\mathbf{N}_{i,j},
\end{equation}
for $i=2,\dots,L$ and $j=1,\dots,L$, according to the definition of $\mathbf{M}$ in \eqref{eq:M}, and by using \eqref{eq:N}.
\end{proof}

To rewrite the channel model in \eqref{eq:HZ1} and highlight the role of the impedance matrices $\mathbf{Z}_{I,\ell}$, we introduce the block matrix
\begin{equation}
\bar{\mathbf{Y}}=(\mathbf{Z}_I+\mathbf{Z}_{II})^{-1},
\end{equation}
partitioned as
\begin{equation}
\bar{\mathbf{Y}}=
\begin{bmatrix}
\bar{\mathbf{Y}}_{1,1} & \cdots & \bar{\mathbf{Y}}_{1,L}\\
\vdots & \ddots & \vdots\\
\bar{\mathbf{Y}}_{L,1} & \cdots & \bar{\mathbf{Y}}_{L,L}
\end{bmatrix},
\end{equation}
where $\bar{\mathbf{Y}}_{i,j}\in\mathbb{C}^{N_I\times N_I}$, for $i,j=1,\ldots,L$.
Since $\bar{\mathbf{Y}}$ is a block lower triangular matrix according to Proposition~\ref{pro}, \eqref{eq:HZ1} can be rewritten as
\begin{multline}
\mathbf{H}
=Z_0\left(Z_0\mathbf{I}+\mathbf{Z}_{RR}\right)^{-1}\\
\times\left(\mathbf{Z}_{RT}
-\sum_{\ell=1}^L\mathbf{Z}_{RI,\ell}\bar{\mathbf{Y}}_{\ell,\ell}\mathbf{Z}_{IT,\ell}\right.\\
\left.-\sum_{\ell=2}^L\mathbf{Z}_{RI,\ell}\sum_{k=1}^{\ell-1}\bar{\mathbf{Y}}_{\ell,k}\mathbf{Z}_{IT,k}
\right)\mathbf{Z}_{TT}^{-1},\label{eq:HZ2}
\end{multline}
where the role of the transmission impedance matrices $\mathbf{Z}_{RI,\ell}$ and $\mathbf{Z}_{IT,\ell}$ is highlighted.
In addition, since Proposition~\ref{pro} yields
\begin{equation}
\bar{\mathbf{Y}}_{\ell,\ell}=(\mathbf{Z}_{I,\ell}+\mathbf{Z}_{II,\ell})^{-1}\label{eq:pro1}
\end{equation}
and
\begin{multline}
\bar{\mathbf{Y}}_{\ell,k}=\left(-1\right)^{\ell-k}
\left(\mathbf{Z}_{I,\ell}+\mathbf{Z}_{II,\ell}\right)^{-1}\\
\times\prod_{p=\ell-1}^{k}\left(\mathbf{Z}_{p+1,p}\left(\mathbf{Z}_{I,p}+\mathbf{Z}_{II,p}\right)^{-1}\right),\label{eq:pro2}
\end{multline}
for $\ell>k$, the channel in \eqref{eq:HZ2} can be further rewritten as
\begin{multline}
\mathbf{H}
=Z_0\left(Z_0\mathbf{I}+\mathbf{Z}_{RR}\right)^{-1}\\
\times\left(\mathbf{Z}_{RT}
-\sum_{\ell=1}^L\mathbf{Z}_{RI,\ell}\left(\mathbf{Z}_{I,\ell}+\mathbf{Z}_{II,\ell}\right)^{-1}\mathbf{Z}_{IT,\ell}\right.\\
-\sum_{\ell=2}^L\mathbf{Z}_{RI,\ell}\sum_{k=1}^{\ell-1}
\left(-1\right)^{\ell-k}\left(\mathbf{Z}_{I,\ell}+\mathbf{Z}_{II,\ell}\right)^{-1}\\
\left.\times\prod_{p=\ell-1}^{k}\left(\mathbf{Z}_{p+1,p}\left(\mathbf{Z}_{I,p}+\mathbf{Z}_{II,p}\right)^{-1}\right)\mathbf{Z}_{IT,k}
\right)\mathbf{Z}_{TT}^{-1},\label{eq:HZ3}
\end{multline}
explicitly emphasizing the impact of the transmission impedance matrices $\mathbf{Z}_{\ell+1,\ell}$ and the reconfigurable impedance matrices of the RISs $\mathbf{Z}_{I,\ell}$.
Remarkably, \eqref{eq:HZ3} gives the channel model of a multi-RIS aided system in the impedance parameters, or $Z$-parameters, accounting for impedance mismatching and mutual coupling, of interest for less explored cascaded RIS-aided systems.
Note that by considering a single-RIS wireless system, i.e., by setting $L=1$ in the channel model in \eqref{eq:HZ3}, the channel model boils down to
\begin{multline}
\mathbf{H}
=Z_0\left(Z_0\mathbf{I}+\mathbf{Z}_{RR}\right)^{-1}\\
\times\left(\mathbf{Z}_{RT}
-\mathbf{Z}_{RI,1}\left(\mathbf{Z}_{I,1}+\mathbf{Z}_{II,1}\right)^{-1}\mathbf{Z}_{IT,1}\right)\mathbf{Z}_{TT}^{-1},
\end{multline}
which perfectly agrees with the channel model adopted in previous literature on physics-compliant modeling of single-RIS systems \cite{gra21,nos24-1,nos24-2,li24,ner23}.

\section{Channel Model with Perfect Matching and\\No Mutual Coupling}
\label{sec:model-no-MC}

In \eqref{eq:HZ3}, we have derived a general channel model for multi-RIS aided system.
In this section, we obtain further insights into the role of the RISs by assuming perfect matching and no mutual coupling at the transmitter, receiver, and RISs.
In detail, we consider the following two additional assumptions.
\begin{enumerate}
\setcounter{enumi}{3}
\item We assume perfect matching to the reference impedance $Z_0$ and no mutual coupling at the transmitter and receiver, i.e., $\mathbf{Z}_{TT}=Z_0\mathbf{I}$ and $\mathbf{Z}_{RR}=Z_0\mathbf{I}$.
\item We assume perfect matching to $Z_0$ and no mutual coupling at all the $L$ RISs, i.e., $\mathbf{Z}_{II,\ell}=Z_0\mathbf{I}$, for $\ell=1,\ldots,L$, which can be achieved by implementing the RISs through large reflectarrays with half-wavelength spacing.
\end{enumerate}
Following assumptions~4) and 5), the channel model in \eqref{eq:HZ3} can be simplified as
\begin{multline}
\mathbf{H}
=\frac{1}{2Z_0}\left(\mathbf{Z}_{RT}
-\sum_{\ell=1}^L\mathbf{Z}_{RI,\ell}\left(\mathbf{Z}_{I,\ell}+Z_0\mathbf{I}\right)^{-1}\mathbf{Z}_{IT,\ell}\right.\\
-\sum_{\ell=2}^L\mathbf{Z}_{RI,\ell}\sum_{k=1}^{\ell-1}
\left(-1\right)^{\ell-k}\left(\mathbf{Z}_{I,\ell}+Z_0\mathbf{I}\right)^{-1}\\
\left.\times\prod_{p=\ell-1}^{k}\left(\mathbf{Z}_{p+1,p}\left(\mathbf{Z}_{I,p}+Z_0\mathbf{I}\right)^{-1}\right)\mathbf{Z}_{IT,k}
\right),\label{eq:HZ4}
\end{multline}
giving the channel model in the $Z$-parameters of a multi-RIS aided system with perfect matching and no mutual coupling.

In the related literature, a RIS is often characterized through its scattering matrix \cite{she20}, which is related to the impedance matrix through a specific mapping according to microwave network theory \cite[Chapter 4]{poz11}.
Specifically, the scattering matrix of the $\ell$th RIS, denoted as $\boldsymbol{\Theta}_{\ell}\in\mathbb{C}^{N_I\times N_I}$, is related to $\mathbf{Z}_{I,\ell}$ through
\begin{equation}
\boldsymbol{\Theta}_{\ell}=\left(\mathbf{Z}_{I,\ell}+Z_0\mathbf{I}\right)^{-1}\left(\mathbf{Z}_{I,\ell}-Z_0\mathbf{I}\right),\label{eq:S}
\end{equation}
as discussed in \cite[Chapter 4]{poz11}.
Thus, by making the following two substitutions in \eqref{eq:HZ4}, we can obtain the channel model in terms of the RIS scattering matrices $\boldsymbol{\Theta}_{\ell}$, typically used in the literature.
First, by expressing \eqref{eq:S} as
\begin{align}
\boldsymbol{\Theta}_{\ell}
=&\left(\mathbf{Z}_{I,\ell}+Z_0\mathbf{I}\right)^{-1}\mathbf{Z}_{I,\ell}-Z_0\left(\mathbf{Z}_{I,\ell}+Z_0\mathbf{I}\right)^{-1}\\
=&\left(\mathbf{Z}_{I,\ell}+Z_0\mathbf{I}\right)^{-1}\mathbf{Z}_{I,\ell}+Z_0\left(\mathbf{Z}_{I,\ell}+Z_0\mathbf{I}\right)^{-1}\\
&-Z_0\left(\mathbf{Z}_{I,\ell}+Z_0\mathbf{I}\right)^{-1}-Z_0\left(\mathbf{Z}_{I,\ell}+Z_0\mathbf{I}\right)^{-1}\\
=&\mathbf{I}-2Z_0(\mathbf{Z}_{I,\ell}+Z_0\mathbf{I})^{-1},
\end{align}
we notice that 
\begin{equation}
\left(\mathbf{Z}_{I,\ell}+Z_0\mathbf{I}\right)^{-1}=-\frac{1}{2Z_0}\left(\boldsymbol{\Theta}_{\ell}-\mathbf{I}\right).\label{eq:sub1}
\end{equation}
Second, we use the notation
\begin{equation}
\mathbf{H}_{RT}=\frac{\mathbf{Z}_{RT}}{2Z_0},\:
\mathbf{H}_{RI,\ell}=\frac{\mathbf{Z}_{RI,\ell}}{2Z_0},\:
\mathbf{H}_{IT,\ell}=\frac{\mathbf{Z}_{IT,\ell}}{2Z_0},\label{eq:sub2}
\end{equation}
for $\ell=1,\ldots,L$, and
\begin{equation}
\mathbf{H}_{\ell+1,\ell}=\frac{\mathbf{Z}_{\ell+1,\ell}}{2Z_0},\label{eq:sub3}
\end{equation}
for $\ell=1,\ldots,L-1$, as introduced in \cite{she20}.
By substituting \eqref{eq:sub1}, \eqref{eq:sub2}, and \eqref{eq:sub3} in \eqref{eq:HZ4}, we obtain
\begin{multline}
\mathbf{H}=\mathbf{H}_{RT}
+\sum_{\ell=1}^L\mathbf{H}_{RI,\ell}(\boldsymbol{\Theta}_{\ell}-\mathbf{I})\mathbf{H}_{IT,\ell}\\
+\sum_{\ell=2}^L\mathbf{H}_{RI,\ell}(\boldsymbol{\Theta}_{\ell}-\mathbf{I})\sum_{k=1}^{\ell-1}
\prod_{p=\ell-1}^{k}\left(\mathbf{H}_{p+1,p}(\boldsymbol{\Theta}_{p}-\mathbf{I})\right)\mathbf{H}_{IT,k},\label{eq:HS4}
\end{multline}
representing the channel model in the multi-RIS aided scenario illustrated in Fig.~\ref{fig:system}.
Note that there are $1+L(L+1)/2$ additive terms in \eqref{eq:HS4}, each corresponding to a path that the signal can follow to flow from the transmitter to the receiver.
Specifically, there are $\ell$ different paths reaching the receiver passing through the channel $\mathbf{H}_{RI,\ell}$, for $\ell=1,\ldots,L$, giving $L(L+1)/2$ paths, in addition to the direct path $\mathbf{H}_{RT}$.

\begin{figure}[t]
\centering
\includegraphics[width=0.48\textwidth]{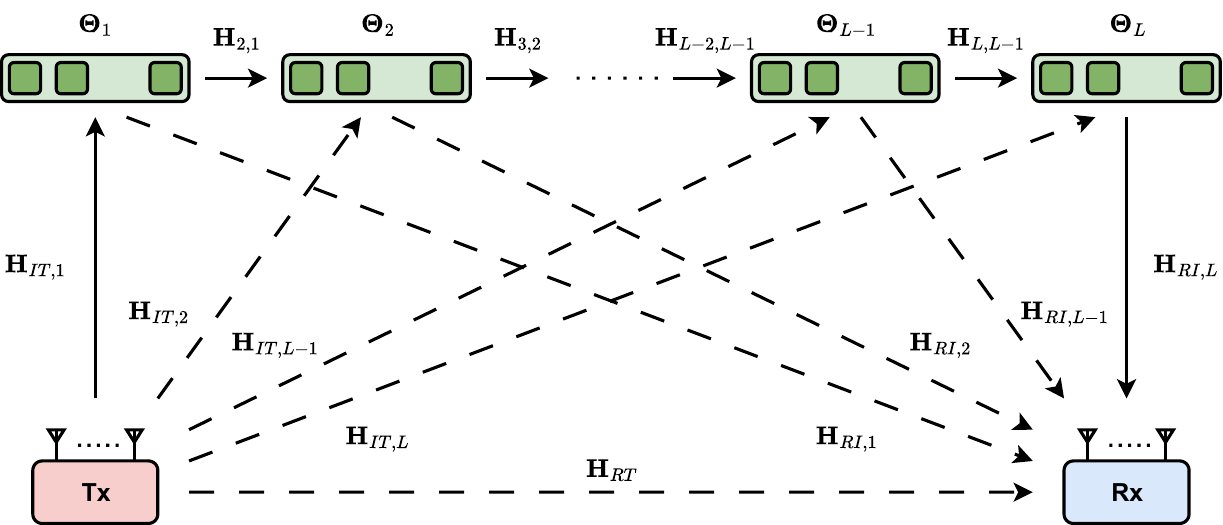}
\caption{Multi-RIS aided MIMO system.}
\label{fig:system}
\end{figure}

Numerous related studies on multi-RIS systems considered the case with $L=2$ RISs \cite{han20,zhe21-1,han22,an23,zhe21-2,qi23}.
For $L=2$, the physics-compliant model \eqref{eq:HS4} boils down to
\begin{multline}
\mathbf{H}=\mathbf{H}_{RT}
+\mathbf{H}_{RI,1}(\boldsymbol{\Theta}_{1}-\mathbf{I})\mathbf{H}_{IT,1}+\mathbf{H}_{RI,2}(\boldsymbol{\Theta}_{2}-\mathbf{I})\mathbf{H}_{IT,2}\\
+\mathbf{H}_{RI,2}(\boldsymbol{\Theta}_{2}-\mathbf{I})
\mathbf{H}_{2,1}(\boldsymbol{\Theta}_{1}-\mathbf{I})\mathbf{H}_{IT,1}.\label{eq:HL2}
\end{multline}
Notably, \eqref{eq:HL2} slightly differs from the channel model widely used in related literature, which is instead given by
\begin{multline}
\mathbf{H}^\prime=\mathbf{H}_{RT}
+\mathbf{H}_{RI,1}\boldsymbol{\Theta}_{1}\mathbf{H}_{IT,1}+\mathbf{H}_{RI,2}\boldsymbol{\Theta}_{2}\mathbf{H}_{IT,2}\\
+\mathbf{H}_{RI,2}\boldsymbol{\Theta}_{2}\mathbf{H}_{2,1}\boldsymbol{\Theta}_{1}\mathbf{H}_{IT,1},\label{eq:HL2prime}
\end{multline}
as used in \cite{han20,zhe21-1,han22,an23,zhe21-2,qi23}.
Interestingly, we observe that the only difference between \eqref{eq:HL2} and \eqref{eq:HL2prime} lies in the fact that the terms $(\boldsymbol{\Theta}_{1}-\mathbf{I})$ and $(\boldsymbol{\Theta}_{2}-\mathbf{I})$ in \eqref{eq:HL2} are replaced by $\boldsymbol{\Theta}_{1}$ and $\boldsymbol{\Theta}_{2}$ in \eqref{eq:HL2prime}, respectively.
This is because of the structural scattering effects of RIS, resulting in a specular reflection, commonly neglected in the communication theoretic literature, as observed in \cite{nos24-1,nos24-2,abr23}.

Since the channel model in \eqref{eq:HS4} includes the effect of wireless channels that may be fully obstructed in practice, we discuss a scenario with practical interest in the following, assuming that the transmitter and receiver are connected only to RIS $1$ and $L$, respectively.
In other words, we make the following assumption.
\begin{enumerate}
\item[6)] We assume that the link between the transmitter and RIS $2,\ldots,L$ is fully blocked, i.e., $\mathbf{Z}_{IT,\ell}=\mathbf{0}$, for $\ell=2,\ldots,L$, the link between the receiver and RIS $1,\ldots,L-1$ is blocked, i.e., $\mathbf{Z}_{RI,\ell}=\mathbf{0}$, for $\ell=1,\ldots,L-1$, the link between the transmitter and receiver is blocked, i.e., $\mathbf{Z}_{RT}=\mathbf{0}$.
\end{enumerate}
With this simplifying assumption, \eqref{eq:HZ4} gives
\begin{multline}
\mathbf{H}
=-\frac{\left(-1\right)^{L-1}}{2Z_0}\mathbf{Z}_{RI,L}
\left(\mathbf{Z}_{I,L}+Z_0\mathbf{I}\right)^{-1}\\
\times\prod_{\ell=L-1}^{1}\left(\mathbf{Z}_{\ell+1,\ell}\left(\mathbf{Z}_{I,\ell}+Z_0\mathbf{I}\right)^{-1}\right)\mathbf{Z}_{IT,1},\label{eq:HZ5-C}
\end{multline}
while \eqref{eq:HS4} gives
\begin{equation}
\mathbf{H}=
\mathbf{H}_{RI,L}\left(\boldsymbol{\Theta}_{L}-\mathbf{I}\right)
\prod_{\ell=L-1}^{1}\left(\mathbf{H}_{\ell+1,\ell}\left(\boldsymbol{\Theta}_{\ell}-\mathbf{I}\right)\right)\mathbf{H}_{IT,1}.\label{eq:H}
\end{equation}
This communication scenario has been commonly studied in related literature on multi-RIS aided communications with a generic number of RISs $L$ \cite{mei21,hua21,mei22-2,mei22-1,ma22,ngu23}.
However, the channel model widely used in related literature slightly differs from \eqref{eq:H}, and is given by
\begin{equation}
\mathbf{H}^\prime=\mathbf{H}_{RI,L}\boldsymbol{\Theta}_{L}\prod_{\ell=L-1}^{1}\left(\mathbf{H}_{\ell+1,\ell}\boldsymbol{\Theta}_{\ell}\right)\mathbf{H}_{IT,1}.\label{eq:Hprime}
\end{equation}
Note that the difference between \eqref{eq:H} and \eqref{eq:Hprime} is that the terms $(\boldsymbol{\Theta}_{\ell}-\mathbf{I})$ in \eqref{eq:H} are replaced by the terms $\boldsymbol{\Theta}_{\ell}$ in \eqref{eq:Hprime}, for $\ell=1,\ldots,L$.
Remarkably, this is due to the fact that the RIS structural scattering is omitted in the channel model \eqref{eq:Hprime} used in previous literature \cite{nos24-1,nos24-2,abr23}.

\section{Channel Model in the Presence of\\Reflective, Hybrid, and Multi-Sector RISs}
\label{sec:model-multi-sector}

The channel models developed in Sections~\ref{sec:model} and \ref{sec:model-no-MC} implicitly assume that all the RISs in the system are reflective RISs.
In this section, we generalize our physics-compliant channel model under assumptions~1) to 6), to multi-RIS aided systems where the RISs can be reflective, hybrid, or multi-sector RISs.

We recall that reflective RISs can only reflect the impinging signal within a half-space region, since the $N_I$ elements are all placed on one side of the surface.
Besides, hybrid RISs, also known as \glspl{star-ris}, are composed of unit cells, each made of 2 RIS elements placed back-to-back in the opposite sides of the surface \cite{li22-1}.
Thus, the impinging signal can be reflected within the same half-space region, or transmitted through the RIS to reach a receiver in the other half-space region.
Following the model in \cite{li22-1}, in hybrid RISs the $N_I$ elements are divided into the two sides of the planar surface, each side covering half-space and including $N_I/2$ elements.
Similarly, in multi-sector RISs with $S$ sectors, the $N_I$ elements are divided into $S$ sectors (the sides of a prism), each covering $1/S$ of the space and including $N_S=N_I/S$ elements \cite{li22-2}.
Note that hybrid RISs are a special case of multi-sector RISs, i.e., with $S=2$ \cite{li22-2}.

To model multi-RIS aided systems containing reflective, hybrid, and multi-sector RISs, we mathematically describe all the RISs as multi-sector RISs, as this description includes reflective and hybrid RISs as special cases.
Specifically, we describe the $\ell$th RIS as a multi-sector RIS with $S_\ell$ sectors, each including $N_{S_\ell}=N_I/S_\ell$ elements, for $\ell=1,\ldots,L$.
Thus, to model the $\ell$th RIS as a reflective RIS or a hybrid RIS, we can set $S_\ell=1$ or $S_\ell=2$, respectively.
For better clarity, we study the impact of the presence of reflective, hybrid, and multi-sector RISs on the simplified channel in \eqref{eq:H}, i.e., derived under the assumptions~1) to 6).
In this case, a RIS receives the signal from only one device (the transmitter or the previous RIS) and forwards it to only one device (the receiver or the following RIS).
Thus, any hybrid or multi-sector RIS is used either in reflective or transmissive mode and all the impinging power is either reflected or transmitted, depending on the operating mode of the RIS.

With all the RISs being multi-sector RISs, the channels $\mathbf{H}_{IT,1}$, $\mathbf{H}_{RI,L}$, and $\mathbf{H}_{\ell+1,\ell}$, and the terms $(\boldsymbol{\Theta}_{\ell}-\mathbf{I})$ in \eqref{eq:H} can be conveniently partitioned as follows.
The channel between the transmitter and the $1$st RIS $\mathbf{H}_{IT,1}$ can be expressed as
\begin{equation}
\mathbf{H}_{IT,1}=\left[\mathbf{H}_{IT,1}^{(1)T},\mathbf{H}_{IT,1}^{(2)T},\ldots,\mathbf{H}_{IT,1}^{(S_1)T}\right]^T,\label{eq:HIT-MS}
\end{equation}
where $\mathbf{H}_{IT,1}^{(s)}\in\mathbb{C}^{N_{S_1}\times N_T}$ is the channel from the transmitter to the $s$th sector of the $1$st RIS, for $s=1,\ldots,S_1$, and $S_1$ is the number of sectors of the $1$st RIS.
Note that if the $1$st RIS is a reflective RIS, we trivially have $S_1=1$ and $\mathbf{H}_{IT,1}=\mathbf{H}_{IT,1}^{(1)}$.
The channel between $L$th RIS and the receiver $\mathbf{H}_{RI,L}$ can be given by
\begin{equation}
\mathbf{H}_{RI,L}=\left[\mathbf{H}_{RI,L}^{(1)},\mathbf{H}_{RI,L}^{(2)},\ldots,\mathbf{H}_{RI,L}^{(S_L)}\right],\label{eq:HRI-MS}
\end{equation}
where $\mathbf{H}_{RI,L}^{(s)}\in\mathbb{C}^{N_R\times N_{S_L}}$ is the channel from the $s$th sector of the $L$th RIS to the receiver, for $s=1,\ldots,S_L$, and $S_L$ is the number of sectors of the $L$th RIS.
The channel between $\ell$th RIS and the $(\ell+1)$th RIS $\mathbf{H}_{\ell+1,\ell}$ can be partitioned as
\begin{equation}
\mathbf{H}_{\ell+1,\ell}=
\begin{bmatrix}
\mathbf{H}_{\ell+1,\ell}^{(1,1)} & \mathbf{H}_{\ell+1,\ell}^{(1,2)} & \cdots & \mathbf{H}_{\ell+1,\ell}^{(1,S_{\ell})}\\
\mathbf{H}_{\ell+1,\ell}^{(2,1)} & \mathbf{H}_{\ell+1,\ell}^{(2,2)} & \cdots & \mathbf{H}_{\ell+1,\ell}^{(2,S_{\ell})}\\
\vdots & \vdots & \ddots & \vdots\\
\mathbf{H}_{\ell+1,\ell}^{(S_{\ell+1},1)} & \mathbf{H}_{\ell+1,\ell}^{(S_{\ell+1},2)} & \cdots & \mathbf{H}_{\ell+1,\ell}^{(S_{\ell+1},S_{\ell})}
\end{bmatrix},\label{eq:HL-MS}
\end{equation}
with $\mathbf{H}_{\ell+1,\ell}^{(t,s)}\in\mathbb{C}^{N_{S_{\ell+1}}\times N_{S_{\ell}}}$ being the channel from the $s$th sector of the $\ell$th RIS to the $t$th sector of the $(\ell+1)$th RIS, for $s=1,\ldots,S_\ell$ and $t=1,\ldots,S_{\ell+1}$.
Similarly, the terms $(\boldsymbol{\Theta}_{\ell}-\mathbf{I})$ can be partitioned as
\begin{equation}
\boldsymbol{\Theta}_{\ell}-\mathbf{I}=
\begin{bmatrix}
\boldsymbol{\Theta}_{\ell}^{(1,1)}-\mathbf{I} & \boldsymbol{\Theta}_{\ell}^{(1,2)} & \cdots & \boldsymbol{\Theta}_{\ell}^{(1,S_{\ell})}\\
\boldsymbol{\Theta}_{\ell}^{(2,1)} & \boldsymbol{\Theta}_{\ell}^{(2,2)}-\mathbf{I} & \cdots & \boldsymbol{\Theta}_{\ell}^{(2,S_{\ell})}\\
\vdots & \vdots & \ddots & \vdots\\
\boldsymbol{\Theta}_{\ell}^{(S_{\ell},1)} & \boldsymbol{\Theta}_{\ell}^{(S_{\ell},2)} & \cdots & \boldsymbol{\Theta}_{\ell}^{(S_{\ell},S_{\ell})}-\mathbf{I}
\end{bmatrix},\label{eq:T-MS}
\end{equation}
with $\boldsymbol{\Theta}_{\ell}^{(s,s)}\in\mathbb{C}^{N_{S_{\ell}}\times N_{S_{\ell}}}$ representing the reflection scattering matrix at the $s$th sector of the $\ell$th RIS, and $\boldsymbol{\Theta}_{\ell}^{(t,s)}\in\mathbb{C}^{N_{S_{\ell}}\times N_{S_{\ell}}}$ being the transmission scattering matrix from the $s$th sector to the $t$th sector of the $\ell$th RIS, for $s,t=1,\ldots,S_\ell$.

To simplify the channels \eqref{eq:HIT-MS}, \eqref{eq:HRI-MS}, and \eqref{eq:HL-MS} with the model of multi-secor RIS developed in \cite{li22-2}, we denote as $s_{\ell,A}$ and $s_{\ell,D}$ the sector of the $\ell$th RIS where the signal arrives and departs, respectively, for $\ell=1,\ldots,L$.
Note that, if the $\ell$th RIS is a reflective RIS, i.e., $S_\ell=1$, we trivially have $s_{\ell,A}=s_{\ell,D}=1$.
Thus, according to \cite{li22-2}, we have $\mathbf{H}_{IT,1}^{(s)}=\mathbf{0}$ if $s\neq s_{1,A}$, $\mathbf{H}_{RI,L}^{(s)}=\mathbf{0}$ if $s\neq s_{L,D}$, and $\mathbf{H}_{\ell+1,\ell}^{(t,s)}=\mathbf{0}$ if $s\neq s_{\ell,D}$ or $t\neq s_{\ell+1,A}$.
With these considerations, we introduce $\bar{\mathbf{H}}_{IT,1}=\mathbf{H}_{IT,1}^{(s_{1,A})}$, $\bar{\mathbf{H}}_{RI,L}=\mathbf{H}_{RI,L}^{(s_{L,D})}$, $\bar{\mathbf{H}}_{\ell+1,\ell}=\mathbf{H}_{\ell+1,\ell}^{(s_{\ell+1,A},s_{\ell,D})}$, and $\bar{\boldsymbol{\Theta}}_{\ell}=\boldsymbol{\Theta}_{\ell}^{(s_{\ell,D},s_{\ell,A})}$, and we can compactly rewrite the channel model in \eqref{eq:H} for systems with multi-sector RISs as
\begin{multline}
\mathbf{H}=
\bar{\mathbf{H}}_{RI,L}\left(\bar{\boldsymbol{\Theta}}_{L}-\delta_{s_{L,D},s_{L,A}}\mathbf{I}\right)\\
\times\prod_{\ell=L-1}^{1}\left(\bar{\mathbf{H}}_{\ell+1,\ell}\left(\bar{\boldsymbol{\Theta}}_{\ell}-\delta_{s_{\ell,D},s_{\ell,A}}\mathbf{I}\right)\right)\bar{\mathbf{H}}_{IT,1},\label{eq:H-MS}
\end{multline}
where $\delta_{t,s}$ represents the Kronecker delta, being $\delta_{t,s}=1$ if $t=s$ and $\delta_{t,s}=0$ if $t\neq s$.
Remarkably, the term $(\bar{\boldsymbol{\Theta}}_{\ell}-\delta_{s_{\ell,D},s_{\ell,A}}\mathbf{I})$ becomes $(\bar{\boldsymbol{\Theta}}_{\ell}-\mathbf{I})$ when the $\ell$th RIS is a reflective, hybrid, or multi-sector RIS used in reflective mode, i.e., the signal arrives and departs from the same sector.
Conversely, the term $(\bar{\boldsymbol{\Theta}}_{\ell}-\delta_{s_{\ell,D},s_{\ell,A}}\mathbf{I})$ boils down to $\bar{\boldsymbol{\Theta}}_{\ell}$ if the $\ell$th RIS is a hybrid or multi-sector RIS used in transmissive mode, i.e., the signal arrives and departs from two different sectors.
Thus, we observe that the structural scattering of RISs used in transmissive mode does not impact the channel.
This occurs because the effect of the structural scattering consists in a specular reflection of the \gls{em} signal at the RIS \cite{abr23}.
Consequently, this specular reflection cannot reach the receiver if it is located in a different sector than the transmitter, assuming that receiver and transmitter are coupled only with the RIS elements of their respective sectors, and RIS elements in different sectors are not coupled with each other.

As in the case of reflective RISs, we notice that the obtained channel model in \eqref{eq:H-MS} differs from the channel model for multi-RIS systems including transmissive RISs used in previous literature, given by
\begin{equation}
\mathbf{H}^\prime=\bar{\mathbf{H}}_{RI,L}\bar{\boldsymbol{\Theta}}_{L}\prod_{\ell=L-1}^{1}\left(\bar{\mathbf{H}}_{\ell+1,\ell}\bar{\boldsymbol{\Theta}}_{\ell}\right)\bar{\mathbf{H}}_{IT,1},\label{eq:Hprime-MS}
\end{equation}
as adopted in \glspl{star-ris} literature \cite{jav24}.
Interestingly, the channel models in \eqref{eq:H-MS} and \eqref{eq:Hprime-MS} coincide when all the RISs are hybrid or multi-sector RISs, and are used in transmissive mode, i.e., $s_{\ell,A}\neq s_{\ell,D}$, for $\ell=1,\ldots,L$, since the structural scattering of RISs used in transmissive mode does not alter the channel.

\section{Channel Gain Scaling Laws with\\Line-of-Sight Channels}
\label{sec:gain-los}

We have derived the channel expression of multi-RIS aided systems, and noticed that it differs from the widely used channel expression in the presence of reflective RISs.
In this section, we compare the physics-compliant and widely used models by deriving the scaling laws of their channel gains under \gls{los} channels, i.e., how they scale with $N_I$ as $N_I$ grows large.
Since the physics-compliant model differs from the widely used one in the presence of reflective RISs, we assume all RISs to work in reflective mode in the following.

Considering the physics-compliant channel model in \eqref{eq:H} and the widely used model in \eqref{eq:Hprime} under \gls{los} channels, the channel between the $L$th RIS and the receiver writes as $\mathbf{H}_{RI,L}=\Lambda_{RI,L}\mathbf{a}_{RI,L}\mathbf{b}_{RI,L}^T$, with $\Lambda_{RI,L}\in\mathbb{R}$ being the path gain, $\mathbf{a}_{RI,L}=[e^{j\alpha_{RI,L,1}},\ldots,e^{j\alpha_{RI,L,N_R}}]^T$, and $\mathbf{b}_{RI,L}=[e^{j\beta_{RI,L,1}},\ldots,e^{j\beta_{RI,L,N_I}}]^T$.
Similarly, the channel between the transmitter and the $1$st RIS is given by $\mathbf{H}_{IT,1}=\Lambda_{IT,1}\mathbf{a}_{IT,1}\mathbf{b}_{IT,1}^T$, where $\Lambda_{IT,1}\in\mathbb{R}$ is the path gain, $\mathbf{a}_{IT,1}=[e^{j\alpha_{IT,1,1}},\ldots,e^{j\alpha_{IT,1,N_I}}]^T$, and $\mathbf{b}_{IT,1}=[e^{j\beta_{IT,1,1}},\ldots,e^{j\beta_{IT,1,N_T}}]^T$.
Besides, $\mathbf{H}_{\ell+1,\ell}=\Lambda_{\ell+1,\ell}\mathbf{a}_{\ell+1,\ell}\mathbf{b}_{\ell+1,\ell}^T$, with $\Lambda_{\ell+1,\ell}\in\mathbb{R}$ being the path gain, $\mathbf{a}_{\ell+1,\ell}=[e^{j\alpha_{\ell+1,\ell,1}},\ldots,e^{j\alpha_{\ell+1,\ell,N_I}}]^T$, and $\mathbf{b}_{\ell+1,\ell}=[e^{j\beta_{\ell+1,\ell,1}},\ldots,e^{j\beta_{\ell+1,\ell,N_I}}]^T$, for $\ell=1,\ldots,L-1$.
Considering the conventional D-RIS architecture (which is optimal under \gls{los} channels \cite{she20}) we have $\boldsymbol{\Theta}_{\ell}=\text{diag}(e^{j\theta_{\ell,1}},\ldots,e^{j\theta_{\ell,N_I}})$, with $\theta_{\ell,n_I}$ being the $n_I$th phase shift of the $\ell$th RIS, for $n_I=1,\ldots,N_I$ and $\ell=1,\ldots,L$.
In the following, we investigate and compare the maximum channel gain achievable in the case of the physics-compliant and the widely used model.

\subsection{Physics-Compliant Channel Model}

For the channel model in \eqref{eq:H}, we can derive a global optimal solution for the scattering matrices $\boldsymbol{\Theta}_{\ell}$ to maximize the channel gain $\Vert\mathbf{H}\Vert^2$ by adapting the optimization method proposed in \cite{mei21} for the widely used channel model.
Specifically, in the case of \gls{los} channels, \eqref{eq:H} can be expressed as
\begin{multline}
\mathbf{H}=
\Lambda
\mathbf{a}_{RI,L}\mathbf{b}_{RI,L}^T\left(\boldsymbol{\Theta}_{L}-\mathbf{I}\right)\mathbf{a}_{L,L-1}\\
\times\left(\prod_{\ell=L-1}^{2}\mathbf{b}_{\ell+1,\ell}^T\left(\boldsymbol{\Theta}_{\ell}-\mathbf{I}\right)\mathbf{a}_{\ell,\ell-1}\right)
\mathbf{b}_{2,1}^T\left(\boldsymbol{\Theta}_{1}-\mathbf{I}\right)\mathbf{a}_{IT,1}\mathbf{b}_{IT,1}^T,\label{eq:Hlos}
\end{multline}
where $\Lambda=\Lambda_{RI,L}\prod_{\ell=L-1}^{1}(\Lambda_{\ell+1,\ell})\Lambda_{IT,1}$ is the total path gain.
Interestingly, \eqref{eq:Hlos} can be rewritten as an outer product $\mathbf{a}_{RI,L}\mathbf{b}_{IT,1}^T$ multiplied by $L$ scalars $K_{\ell}\in\mathbb{C}$, i.e.,
\begin{equation}
\mathbf{H}=
\Lambda\prod_{\ell=L}^{1}\left(K_{\ell}\right)\mathbf{a}_{RI,L}\mathbf{b}_{IT,1}^T,
\end{equation}
where $K_{L}=\mathbf{b}_{RI,L}^T(\boldsymbol{\Theta}_{L}-\mathbf{I})\mathbf{a}_{L,L-1}$, $K_{1}=\mathbf{b}_{2,1}^T(\boldsymbol{\Theta}_{1}-\mathbf{I})\mathbf{a}_{IT,1}$, and $K_{\ell}=\mathbf{b}_{\ell+1,\ell}^T(\boldsymbol{\Theta}_{\ell}-\mathbf{I})\mathbf{a}_{\ell,\ell-1}$, for $\ell=2,\ldots,L-1$.
Noting that the channel gain, i.e., the spectral norm of $\mathbf{H}$, writes as
\begin{equation}
\left\Vert\mathbf{H}\right\Vert^2
=\Lambda^2\prod_{\ell=L}^{1}\left\vert K_{\ell}\right\vert^2\left\Vert\mathbf{a}_{RI,L}\mathbf{b}_{IT,1}^T\right\Vert^2,\label{eq:proof1}
\end{equation}
each $\boldsymbol{\Theta}_{\ell}=\text{diag}(\theta_{\ell,1},\ldots,\theta_{\ell,N_I})$ can be individually globally optimized to maximize $\vert K_{\ell}\vert^2$, respectively, by setting
\begin{multline}
\theta_{L,n_I}=-\arg(\mathbf{b}_{RI,L}^T\mathbf{a}_{L,L-1})\\
-\arg([\mathbf{b}_{RI,L}]_{n_I})-\arg([\mathbf{a}_{L,L-1}]_{n_I}),\label{eq:TL}
\end{multline}
\begin{multline}
\theta_{1,n_I}=-\arg(\mathbf{b}_{2,1}^T\mathbf{a}_{IT,1})\\
-\arg([\mathbf{b}_{2,1}]_{n_I})-\arg([\mathbf{a}_{IT,1}]_{n_I}),\label{eq:T1}
\end{multline}
\begin{multline}
\theta_{\ell,n_I}=-\arg(\mathbf{b}_{\ell+1,\ell}^T\mathbf{a}_{\ell,\ell-1})\\
-\arg([\mathbf{b}_{\ell+1,\ell}]_{n_I})-\arg([\mathbf{a}_{\ell,\ell-1}]_{n_I}),\label{eq:Tell}
\end{multline}
for $\ell=2,\ldots,L-1$ and $n_I=1,\ldots,N_I$, giving
$\vert K_{L}\vert^2=(\vert\mathbf{b}_{RI,L}^T\mathbf{a}_{L,L-1}\vert+N_I)^2$,
$\vert K_{1}\vert^2=(\vert\mathbf{b}_{2,1}^T\mathbf{a}_{IT,1}\vert+N_I)^2$, and
$\vert K_{\ell}\vert^2=(\vert\mathbf{b}_{\ell+1,\ell}^T\mathbf{a}_{\ell,\ell-1}\vert+N_I)^2$, for $\ell=2,\ldots,L-1$.
By substituting the optimal values of $\vert K_{\ell}\vert^2$ into \eqref{eq:proof1}, the optimal channel gain is given by
\begin{multline}
\left\Vert\mathbf{H}\right\Vert^2
=\Lambda^2\left(\left\vert\mathbf{b}_{RI,L}^T\mathbf{a}_{L,L-1}\right\vert+N_I\right)^2\\
\times\prod_{\ell=L-1}^{2}\left(\left\vert\mathbf{b}_{\ell+1,\ell}^T\mathbf{a}_{\ell,\ell-1}\right\vert+N_I\right)^2\\
\times\left(\left\vert\mathbf{b}_{2,1}^T\mathbf{a}_{IT,1}\right\vert+N_I\right)^2\left\Vert\mathbf{a}_{RI,L}\mathbf{b}_{IT,1}^T\right\Vert^2.\label{eq:proof2}
\end{multline}
In addition, since $\Vert\mathbf{a}_{RI,L}\mathbf{b}_{IT,1}^T\Vert^2=\Vert\mathbf{a}_{RI,L}\Vert^2\Vert\mathbf{b}_{IT,1}\Vert^2$ (because of the property of the spectral norm of an outer product) and $\Vert\mathbf{a}_{RI,L}\Vert^2=N_R$ and $\Vert\mathbf{b}_{IT,1}\Vert^2=N_T$, we obtain
\begin{multline}
\left\Vert\mathbf{H}\right\Vert^2=\Lambda^2\left(\left\vert\mathbf{b}_{RI,L}^T\mathbf{a}_{L,L-1}\right\vert+N_I\right)^2\\
\times\prod_{\ell=L-1}^{2}\left(\left\vert\mathbf{b}_{\ell+1,\ell}^T\mathbf{a}_{\ell,\ell-1}\right\vert+N_I\right)^2\left(\left\vert\mathbf{b}_{2,1}^T\mathbf{a}_{IT,1}\right\vert+N_I\right)^2N_RN_T,
\end{multline}
which depends on the specific channel realizations $\mathbf{H}_{RI,L}$, $\mathbf{H}_{IT,1}$, and $\mathbf{H}_{\ell+1,\ell}$, for $\ell=1,\ldots,L-1$.

To derive the expected value $\text{E}[\Vert\mathbf{H}\Vert^2]$, we consider the entries of the vectors $\mathbf{a}_{IT,1}$, $\mathbf{b}_{RI,L}$, $\mathbf{a}_{\ell+1,\ell}$, and $\mathbf{b}_{\ell+1,\ell}$ to be \gls{iid} with unit modulus and phase uniformly distributed in $[0,2\pi)$.
In this way, the three scalar products $\mathbf{b}_{RI,L}^T\mathbf{a}_{L,L-1}=\sum_{n_I=1}^{N_I}[\mathbf{b}_{RI,L}]_{n_I}[\mathbf{a}_{L,L-1}]_{n_I}$,
$\mathbf{b}_{\ell+1,\ell}^T\mathbf{a}_{\ell,\ell-1}=\sum_{n_I=1}^{N_I}[\mathbf{b}_{\ell+1,\ell}]_{n_I}[\mathbf{a}_{\ell,\ell-1}]_{n_I}$, and
$\mathbf{b}_{2,1}^T\mathbf{a}_{IT,1}=\sum_{n_I=1}^{N_I}[\mathbf{b}_{2,1}]_{n_I}[\mathbf{a}_{IT,1}]_{n_I}$ are three independent sums of $N_I$ \gls{iid} complex random variables with unit modulus and phase uniformly distributed in $[0,2\pi)$, i.e., having mean $0$ and variance $1$.
Following the Central Limit Theorem, $\mathbf{b}_{RI,L}^T\mathbf{a}_{L,L-1}$, $\mathbf{b}_{\ell+1,\ell}^T\mathbf{a}_{\ell,\ell-1}$, and $\mathbf{b}_{2,1}^T\mathbf{a}_{IT,1}$ are therefore independent and all distributed as $\mathcal{CN}(0,N_I)$, for $N_I$ large enough.
Thus, we have
\begin{equation}
\text{E}\left[\left\Vert\mathbf{H}\right\Vert^2\right]=\Lambda^2\text{E}\left[\left(\left\vert c\right\vert+N_I\right)^2\right]^LN_RN_T,
\end{equation}
where $c\sim\mathcal{CN}(0,N_I)$.
By using $\text{E}[\vert c\vert]=\sqrt{\frac{\pi}{4}N_I}$ and $\text{E}[\vert c\vert^2]=N_I$, we obtain
\begin{equation}
\text{E}\left[\left\Vert\mathbf{H}\right\Vert^2\right]=\Lambda^2\left(N_I^2+\sqrt{\pi N_I}N_I+N_I\right)^LN_RN_T,\label{eq:EG}
\end{equation}
giving the scaling law (for sufficiently large $N_I$) of the average channel gain of the physics-compliant model.

\subsection{Widely Used Channel Model}

Considering the widely used channel model in \eqref{eq:Hprime} under \gls{los} channels, we have
\begin{equation}
\mathbf{H}^\prime=
\Lambda\prod_{\ell=L}^{1}\left(K_{\ell}^\prime\right)\mathbf{a}_{RI,L}\mathbf{b}_{IT,1}^T,
\end{equation}
where $K_{\ell}^\prime\in\mathbb{C}$ are given by $K_{L}^\prime=\mathbf{b}_{RI,L}^T\boldsymbol{\Theta}_{L}\mathbf{a}_{L,L-1}$, $K_{1}^\prime=\mathbf{b}_{2,1}^T\boldsymbol{\Theta}_{1}\mathbf{a}_{IT,1}$, and $K_{\ell}^\prime=\mathbf{b}_{\ell+1,\ell}^T\boldsymbol{\Theta}_{\ell}\mathbf{a}_{\ell,\ell-1}$, for $\ell=2,\ldots,L-1$.
As discussed in \cite{mei21}, the channel gain $\Vert\mathbf{H}^\prime\Vert^2$ can be globally maximized by setting
\begin{align}
\theta_{L,n_I}&=-\arg([\mathbf{b}_{RI,L}]_{n_I})-\arg([\mathbf{a}_{L,L-1}]_{n_I}),\label{eq:sol-wu1}\\
\theta_{1,n_I}&=-\arg([\mathbf{b}_{2,1}]_{n_I})-\arg([\mathbf{a}_{IT,1}]_{n_I}),\label{eq:sol-wu2}\\
\theta_{\ell,n_I}&=-\arg([\mathbf{b}_{\ell+1,\ell}]_{n_I})-\arg([\mathbf{a}_{\ell,\ell-1}]_{n_I}),\label{eq:sol-wu3}
\end{align}
for $\ell=2,\ldots,L-1$ and $n_I=1,\ldots,N_I$.
Accordingly, the maximum channel gain and its expected value are given by
\begin{equation}
\left\Vert\mathbf{H}^\prime\right\Vert^2=\text{E}\left[\left\Vert\mathbf{H}^\prime\right\Vert^2\right]=\Lambda^2N_I^{2L}N_RN_T,\label{eq:Gprime}
\end{equation}
in agreement with previous literature \cite{mei21}.


\section{Channel Gain Maximization with\\Multipath Channels}
\label{sec:gain-multipath}

In this section, we analyze the achievable channel gains for the physics-compliant and the widely used channel models, under generic multipath channels, i.e., we do not assume \gls{los} and do not make any prior assumption on the channel matrices.
Since closed-form expressions of their scaling laws are not available in this case, we propose an iterative algorithm to maximize the channel gains by optimizing the RIS scattering matrices.
In addition, we provide upper bounds on the channel gains to verify the effectiveness of the proposed optimization algorithm.

\subsection{Physics-Compliant Channel Model}

Consider the physics-compliant channel model in \eqref{eq:H} in the presence of multipath channels.
We propose to maximize the gain of this channel $\Vert\mathbf{H}\Vert^2$ by iteratively optimizing the $L$ RIS scattering matrices.
In detail, when the $\ell$th RIS scattering matrix $\boldsymbol{\Theta}_{\ell}$ is optimized with the other $L-1$ scattering matrices being fixed, the channel model in \eqref{eq:H} can be rewritten as
\begin{equation}
\mathbf{H}=\mathbf{H}_{RI,\ell}\left(\boldsymbol{\Theta}_{\ell}-\mathbf{I}\right)\mathbf{H}_{IT,\ell},
\end{equation}
where
\begin{equation}
\mathbf{H}_{RI,\ell}=\mathbf{H}_{RI,L}\prod_{k=L}^{\ell+1}\left(\left(\boldsymbol{\Theta}_{k}-\mathbf{I}\right)\mathbf{H}_{k,k-1}\right),\label{eq:hRIl}
\end{equation}
if $\ell=1,\ldots,L-1$, and
\begin{equation}
\mathbf{H}_{IT,\ell}=\prod_{k=\ell-1}^{1}\left(\mathbf{H}_{k+1,k}\left(\boldsymbol{\Theta}_{k}-\mathbf{I}\right)\right)\mathbf{H}_{IT,1},\label{eq:hITl}
\end{equation}
if $\ell=2,\ldots,L$.
Thus, $\boldsymbol{\Theta}_{\ell}$ is updated by solving
\begin{align}
\underset{\boldsymbol{\Theta}_{\ell}}{\mathsf{\mathrm{max}}}\;\;
&\left\Vert\mathbf{H}_{RI,\ell}\left(\boldsymbol{\Theta}_{\ell}-\mathbf{I}\right)\mathbf{H}_{IT,\ell}\right\Vert^{2}\label{eq:prob1-o}\\
\mathsf{\mathrm{s.t.}}\;\;\;
&\boldsymbol{\Theta}_{\ell}=\text{diag}\left(e^{j\theta_{\ell,1}},e^{j\theta_{\ell,2}},\ldots,e^{j\theta_{\ell,N_I}}\right).\label{eq:prob1-c}
\end{align}
To this end, \eqref{eq:prob1-o}-\eqref{eq:prob1-c} is reformulated into the equivalent problem
\begin{align}
\underset{\mathbf{u},\mathbf{v},\boldsymbol{\Theta}_{\ell}}{\mathsf{\mathrm{max}}}\;\;
&\left\vert\mathbf{u}\mathbf{H}_{RI,\ell}\left(\boldsymbol{\Theta}_{\ell}-\mathbf{I}\right)\mathbf{H}_{IT,\ell}\mathbf{v}\right\vert^{2}\label{eq:prob2-o}\\
\mathsf{\mathrm{s.t.}}\;\;\;
&\eqref{eq:prob1-c},\;\left\Vert\mathbf{u}\right\Vert=1,\;\left\Vert\mathbf{v}\right\Vert=1,\label{eq:prob2-c}
\end{align}
where the auxiliary variables $\mathbf{u}\in\mathbb{C}^{1\times N_I}$ and $\mathbf{v}\in\mathbb{C}^{N_I\times 1}$ have been added.
To solve \eqref{eq:prob2-o}-\eqref{eq:prob2-c}, we initialize $\mathbf{u}$ and $\mathbf{v}$ to feasible values and alternate between the following two steps until convergence of the objective \eqref{eq:prob2-o}:
\textit{i)} with $\mathbf{u}$ and $\mathbf{v}$ fixed, $\boldsymbol{\Theta}_{\ell}$ is updated by solving
\begin{equation}
\underset{\boldsymbol{\Theta}_{\ell}}{\mathsf{\mathrm{max}}}\;
\left\vert g_{RT,\ell}+\mathbf{g}_{RI,\ell}\boldsymbol{\Theta}_{\ell}\mathbf{g}_{IT,\ell}\right\vert^{2}\;
\mathsf{\mathrm{s.t.}}\;
\eqref{eq:prob1-c},\label{eq:prob3}
\end{equation}
where $g_{RT,\ell}=-\mathbf{u}\mathbf{H}_{RI,\ell}\mathbf{H}_{IT,\ell}\mathbf{v}$, $\mathbf{g}_{RI,\ell}=\mathbf{u}\mathbf{H}_{RI,\ell}$, and $\mathbf{g}_{IT,\ell}=\mathbf{H}_{IT,\ell}\mathbf{v}$, whose global optimal solution is $\theta_{\ell,n_I}=\arg(g_{RT,\ell})-\arg([\mathbf{g}_{RI,\ell}]_{n_I})-\arg([\mathbf{g}_{IT,\ell}]_{n_I})$, for $n_I=1,\ldots,N_I$;
\textit{ii)} with $\boldsymbol{\Theta}_{\ell}$ fixed, $\mathbf{u}$ and $\mathbf{v}$ are updated as the dominant left and right singular vectors of $\mathbf{H}_{RI,\ell}\left(\boldsymbol{\Theta}_{\ell}-\mathbf{I}\right)\mathbf{H}_{IT,\ell}$, respectively, which is global optimal.
Although problem \eqref{eq:prob1-o}-\eqref{eq:prob1-c} is written for D-RIS, it can directly be adapted to BD-RIS by substituting in \eqref{eq:prob1-c} the appropriate constraint for the BD-RIS scattering matrix.
Global optimal solutions for \eqref{eq:prob3} adapted to fully-/group-connected RISs and tree-/forest-connected RISs have been derived in \cite{ner22} and \cite{ner23-1}, respectively.
As summarized in Alg.~\ref{alg:opt}, our optimization algorithm solves \eqref{eq:prob1-o}-\eqref{eq:prob1-c} for the $\ell$th RIS iterating from  $\ell=1$ to $L$ until convergence of the objective \eqref{eq:prob1-o} is reached.

\begin{algorithm}[t]
\begin{algorithmic}[1]
\REQUIRE $\mathbf{H}_{RI,L}$, $\mathbf{H}_{IT,1}$, $\mathbf{H}_{\ell+1,\ell}$, for $\ell=1,\ldots,L-1$.
\ENSURE $\boldsymbol{\Theta}_1,\ldots,\boldsymbol{\Theta}_L$.
\STATE{Initialize $\boldsymbol{\Theta}_2,\ldots,\boldsymbol{\Theta}_L$.}
\WHILE{no convergence of objective \eqref{eq:prob1-o}}
\FOR{$\ell=1$ \textbf{to} $L$}
\STATE{Update $\mathbf{H}_{RI,\ell}$ by \eqref{eq:hRIl} if $\ell\neq L$.}
\STATE{Update $\mathbf{H}_{IT,\ell}$ by \eqref{eq:hITl} if $\ell\neq 1$.}
\STATE{Update $\boldsymbol{\Theta}_\ell$ by solving \eqref{eq:prob1-o}-\eqref{eq:prob1-c}.}
\ENDFOR
\ENDWHILE
\STATE{Return $\boldsymbol{\Theta}_1,\ldots,\boldsymbol{\Theta}_L$.}
\end{algorithmic}
\caption{Optimization of multi-RIS aided systems with multipath channels}
\label{alg:opt}
\end{algorithm}

To verify the effectiveness of Alg.~\ref{alg:opt}, we derive an upper bound on the achievable channel gain $\Vert\mathbf{H}\Vert^{2}$.
For this purpose, we notice that the physics-compliant channel in \eqref{eq:H} can be rewritten as a sum of $2^L$ terms as
\begin{equation}
\mathbf{H}=\sum_{i=1}^{2^L}\mathbf{H}^{(i)},
\end{equation}
where
\begin{equation}
\mathbf{H}^{(i)}=\left(-1\right)^{L-W^{(i)}}
\mathbf{H}_{RI,L}\boldsymbol{\Theta}_{L}^{b_L^{(i)}}
\prod_{\ell=L-1}^{1}\left(\mathbf{H}_{\ell+1,\ell}\boldsymbol{\Theta}_{\ell}^{b_{\ell}^{(i)}}\right)\mathbf{H}_{IT,1},
\end{equation}
with $\mathbf{b}^{(i)}=[b_1^{(i)},b_2^{(i)},\ldots,b_L^{(i)}]\in\{0,1\}^L$ being the $i$th $L$-bit binary vector and $W^{(i)}$ being the number of 1s in $\mathbf{b}^{(i)}$, and where $\boldsymbol{\Theta}_{\ell}^0=\mathbf{I}$ and $\boldsymbol{\Theta}_{\ell}^1=\boldsymbol{\Theta}_{\ell}$.
Thus, by the triangle inequality, $\Vert\mathbf{H}\Vert^2$ can be upper bounded by
\begin{equation}
\left\Vert\mathbf{H}\right\Vert^2\leq\left(\sum_{i=1}^{2^L}\left\Vert\mathbf{H}^{(i)}\right\Vert\right)^2,\label{eq:UB1}
\end{equation}
where $\Vert\mathbf{H}^{(i)}\Vert$ are upper bounded depending on the value of $W^{(i)}$, as explained in the following.
First, if $W^{(i)}=0$, i.e., there is no scattering matrix $\boldsymbol{\Theta}_{\ell}$ in $\mathbf{H}^{(i)}$, we have
\begin{equation}
\left\Vert\mathbf{H}^{(i)}\right\Vert=\left\Vert\mathbf{H}_{RI,L}\mathbf{H}_{L,L-1}\cdots\mathbf{H}_{2,1}\mathbf{H}_{IT,1}\right\Vert.\label{eq:UB2}
\end{equation}
Second, if $W^{(i)}=1$, i.e., there is only one scattering matrix in $\mathbf{H}^{(i)}$, denoted as $\boldsymbol{\Theta}_{u}$, it holds
\begin{multline}
\left\Vert\mathbf{H}^{(i)}\right\Vert\leq\left\Vert\mathbf{H}_{RI,L}\mathbf{H}_{L,L-1}\cdots\mathbf{H}_{u+1,u}\right\Vert\\
\times\left\Vert\mathbf{H}_{u,u-1}\cdots\mathbf{H}_{2,1}\mathbf{H}_{IT,1}\right\Vert,\label{eq:UB3}
\end{multline}
following the submultiplicative property of the spectral norm, and that $\Vert\boldsymbol{\Theta}_u\Vert=1$ for any unitary $\boldsymbol{\Theta}_u$.
Third, if $W^{(i)}\geq2$, i.e., there are multiple scattering matrices $\boldsymbol{\Theta}_{\ell}$ in $\mathbf{H}^{(i)}$, with $\ell\in\{u_1<u_2<\ldots<u_{W^{(i)}}\}$, we have
\begin{multline}
\left\Vert\mathbf{H}^{(i)}\right\Vert\leq\left\Vert\mathbf{H}_{RI,L}\mathbf{H}_{L,L-1}\cdots\mathbf{H}_{u_{W^{(i)}}+1,u_{W^{(i)}}}\right\Vert\\
\times\prod_{w=W^{(i)}-1}^{1}\left(\left\Vert\mathbf{H}_{u_{w+1},u_{w+1}-1}\cdots\mathbf{H}_{u_w+1,u_w}\right\Vert\right)\\
\times\left\Vert\mathbf{H}_{u_1,u_1-1}\cdots\mathbf{H}_{2,1}\mathbf{H}_{IT,1}\right\Vert,\label{eq:UB4}
\end{multline}
following the submultiplicative property of the spectral norm, and that $\Vert\boldsymbol{\Theta}_\ell\Vert=1$, $\forall\ell$.

To better visualize the upper bound given in \eqref{eq:UB1}-\eqref{eq:UB4}, consider a multi-RIS aided system with $L=2$ RISs.
In this case, the physics-compliant channel is
\begin{equation}
\mathbf{H}
=\mathbf{H}_{RI,2}\left(\boldsymbol{\Theta}_2-\mathbf{I}\right)\mathbf{H}_{2,1}\left(\boldsymbol{\Theta}_1-\mathbf{I}\right)\mathbf{H}_{IT,1},
\end{equation}
which can be rewritten as a sum of $2^L=4$ terms as
\begin{multline}
\mathbf{H}
=\mathbf{H}_{RI,2}\boldsymbol{\Theta}_2\mathbf{H}_{2,1}\boldsymbol{\Theta}_1\mathbf{H}_{IT,1}
-\mathbf{H}_{RI,2}\boldsymbol{\Theta}_2\mathbf{H}_{2,1}\mathbf{H}_{IT,1}\\
-\mathbf{H}_{RI,2}\mathbf{H}_{2,1}\boldsymbol{\Theta}_1\mathbf{H}_{IT,1}
+\mathbf{H}_{RI,2}\mathbf{H}_{2,1}\mathbf{H}_{IT,1}.\label{eq:H2}
\end{multline}
Thus, by using the triangle inequality and individually upper bounding each term, we obtain
\begin{multline}
\left\Vert\mathbf{H}\right\Vert^2
\leq\left(\left\Vert\mathbf{H}_{RI,2}\right\Vert\left\Vert\mathbf{H}_{2,1}\right\Vert\left\Vert\mathbf{H}_{IT,1}\right\Vert
+\left\Vert\mathbf{H}_{RI,2}\right\Vert\left\Vert\mathbf{H}_{2,1}\mathbf{H}_{IT,1}\right\Vert\right.\\
+\left.\left\Vert\mathbf{H}_{RI,2}\mathbf{H}_{2,1}\right\Vert\left\Vert\mathbf{H}_{IT,1}\right\Vert
+\left\Vert\mathbf{H}_{RI,2}\mathbf{H}_{2,1}\mathbf{H}_{IT,1}\right\Vert\right)^2,
\end{multline}
which is the upper bound in \eqref{eq:UB1}-\eqref{eq:UB4} for the case $L=2$.

\subsection{Widely Used Channel Model}

For the widely used channel model, the optimization algorithm in Alg.~\ref{alg:opt} can still be applied considering two modifications.
First, the equivalent channels $\mathbf{H}_{RI,\ell}$ and $\mathbf{H}_{IT,\ell}$ in \eqref{eq:hRIl} and \eqref{eq:hITl} are now defined as
$\mathbf{H}_{RI,\ell}=\mathbf{H}_{RI,L}\prod_{k=L}^{\ell+1}(\boldsymbol{\Theta}_{k}\mathbf{H}_{k,k-1})$ and
$\mathbf{H}_{IT,\ell}=\prod_{k=\ell-1}^{1}(\mathbf{H}_{k+1,k}\boldsymbol{\Theta}_{k})\mathbf{H}_{IT,1}$.
Second, the objective of problem \eqref{eq:prob1-o}-\eqref{eq:prob1-c} is now given by $\Vert\mathbf{H}_{RI,\ell}\boldsymbol{\Theta}_{\ell}\mathbf{H}_{IT,\ell}\Vert^{2}$.
This new problem can still be solved by introducing the auxiliary variables $\mathbf{u}$ and $\mathbf{v}$ and alternating between the following two steps:
\textit{i)} with $\mathbf{u}$ and $\mathbf{v}$ fixed, $\boldsymbol{\Theta}_{\ell}$ is updated by solving
\begin{equation}
\underset{\boldsymbol{\Theta}_{\ell}}{\mathsf{\mathrm{max}}}\;
\left\vert\mathbf{g}_{RI,\ell}\boldsymbol{\Theta}_{\ell}\mathbf{g}_{IT,\ell}\right\vert^{2}\;
\mathsf{\mathrm{s.t.}}\;
\eqref{eq:prob1-c},
\end{equation}
where $\mathbf{g}_{RI,\ell}=\mathbf{u}\mathbf{H}_{RI,\ell}$ and $\mathbf{g}_{IT,\ell}=\mathbf{H}_{IT,\ell}\mathbf{v}$, whose global optimal solution is $\theta_{\ell,n_I}=-\arg([\mathbf{g}_{RI,\ell}]_{n_I})-\arg([\mathbf{g}_{IT,\ell}]_{n_I})$, for $n_I=1,\ldots,N_I$, for D-RIS, and given in \cite{ner22,ner23-1} for BD-RIS;
\textit{ii)} with $\boldsymbol{\Theta}_{\ell}$ fixed, $\mathbf{u}$ and $\mathbf{v}$ are updated as the dominant left and right singular vectors of $\mathbf{H}_{RI,\ell}\boldsymbol{\Theta}_{\ell}\mathbf{H}_{IT,\ell}$, respectively.

Besides, the channel gain $\Vert\mathbf{H}^\prime\Vert^2$ is upper bounded by
\begin{equation}
\left\Vert\mathbf{H}^\prime\right\Vert^2
\leq\left\Vert\mathbf{H}_{RI,L}\right\Vert^2
\prod_{\ell=L-1}^{1}\left(\left\Vert\mathbf{H}_{\ell+1,\ell}\right\Vert^2\right)\left\Vert\mathbf{H}_{IT,1}\right\Vert^2,\label{eq:UBprime}
\end{equation}
applying the submultiplicativity of the spectral norm, and that $\Vert\boldsymbol{\Theta}_\ell\Vert=1$, $\forall\ell$.
For example, in the case $L=2$, the gain of the widely used channel is upper bounded by $\Vert\mathbf{H}^\prime\Vert^2\leq\Vert\mathbf{H}_{RI,2}\Vert^2\Vert\mathbf{H}_{2,1}\Vert^2\Vert\mathbf{H}_{IT,1}\Vert^2$.

\section{Numerical Results}
\label{sec:results}

In this section, we numerically quantify the difference between the channel models in \eqref{eq:H} and \eqref{eq:Hprime}, and validate the theoretical insights obtained in Sections~\ref{sec:gain-los} and \ref{sec:gain-multipath} for \gls{los} and multipath channels, respectively.
To this end, we introduce the relative difference between the average channel gain of the physics-compliant model $\text{E}[\Vert\mathbf{H}\Vert^2]$ and the widely used model $\text{E}[\Vert\mathbf{H}^\prime\Vert^2]$ as
\begin{equation}
\eta=\frac{\text{E}\left[\left\Vert\mathbf{H}\right\Vert^2\right]-\text{E}\left[\left\Vert\mathbf{H}^\prime\right\Vert^2\right]}{\text{E}\left[\left\Vert\mathbf{H}^\prime\right\Vert^2\right]}.\label{eq:eta1}
\end{equation}
Furthermore, as the physics-compliant and the widely used models are different, optimizing the RISs based on the widely used model results in performance degradation when testing on the physics-compliant model.
To assess such a performance degradation, we consider the normalized gain
\begin{equation}
\rho=\frac{\text{E}\left[\left\Vert\mathbf{H}^{\text{Sub}}\right\Vert^2\right]}{\text{E}\left[\left\Vert\mathbf{H}\right\Vert^2\right]},\label{eq:rho1}
\end{equation}
where $\text{E}[\Vert\mathbf{H}^{\text{Sub}}\Vert^2]$ is the average channel gain of the physics-compliant model obtained by optimizing the RISs based on the widely used model, giving a suboptimal solution.
In the following, we separately analyze multi-RIS aided systems under \gls{los} and multipath channels, by setting $N_R=N_T=2$.

\subsection{Line-of-Sight Channels}

In Fig.~\ref{fig:LoS}, we report the theoretical scaling laws of the physics-compliant and widely used models given by \eqref{eq:EG} and \eqref{eq:Gprime}, respectively, and the corresponding channel gains obtained through Monte Carlo simulations using the proposed globally optimal optimization strategy, with unit path gain, i.e., $\Lambda=1$.
In our simulations, we consider $\alpha_{RI,L,n_R}$, $\beta_{RI,L,n_I}$, $\alpha_{IT,1,n_I}$, $\beta_{IT,1,n_T}$, $\alpha_{\ell+1,\ell,n_I}$, and $\beta_{\ell+1,\ell,n_I}$ independent and uniformly distributed in $[0,2\pi]$.
We observe that the theoretical scaling laws are accurate as they coincide with the numerical simulations.
Furthermore, the physics-compliant model results in a higher channel gain than the widely used model because of the structural scattering terms, which can be exploited to increase the channel gain.

By substituting the scaling laws \eqref{eq:EG} and \eqref{eq:Gprime} into \eqref{eq:eta1}, we obtain the relative difference between $\text{E}[\Vert\mathbf{H}\Vert^2]$ and $\text{E}[\Vert\mathbf{H}^\prime\Vert^2]$ under \gls{los} channels as
\begin{equation}
\eta=\frac{\left(N_I+\sqrt{\pi N_I}+1\right)^L-N_I^{L}}{N_I^L},\label{eq:eta2}
\end{equation}
in closed form.
In Fig.~\ref{fig:LoS-diff}, we report the relative difference $\eta$ given theoretically in \eqref{eq:eta2} and obtained as a result of Monte Carlo simulations.
We observe that the theoretical insights are confirmed by the numerical results.
In addition, we make the following two observations.
\textit{First}, the relative difference decreases with $N_I$ since in \eqref{eq:EG} the structural scattering term scales with $N_I$ while the RIS-aided term scales with $N_I^2$.
However, the relative difference is non-negligible for a practical number of RIS elements.
Specifically, considering $L=4$ RISs, the relative difference is higher than 400\% when $N_I=16$ and higher than 80\% when $N_I=128$.
\textit{Second}, the relative difference increases with $L$ since each RIS contributes its structural scattering, not included in the widely used model.

\begin{figure}[t]
\centering
\subfigure[$L=2$]{
\includegraphics[width=0.23\textwidth]{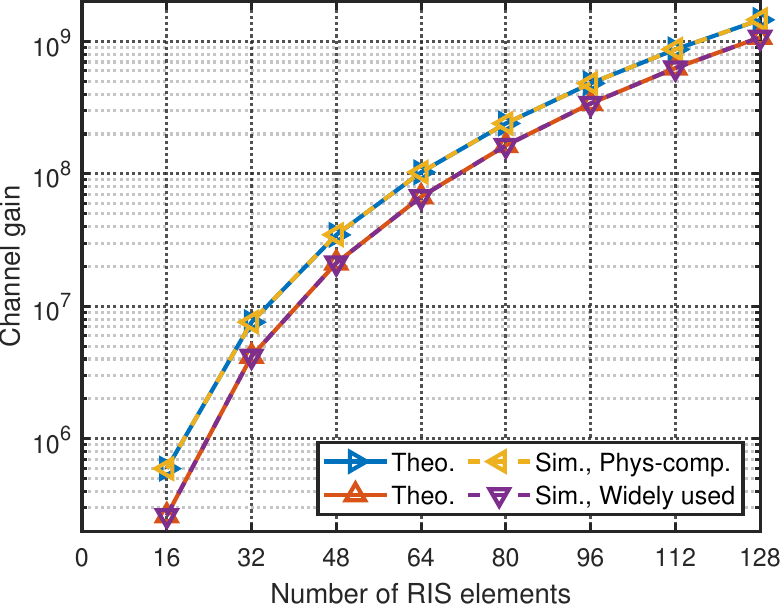}}
\subfigure[$L=4$]{
\includegraphics[width=0.23\textwidth]{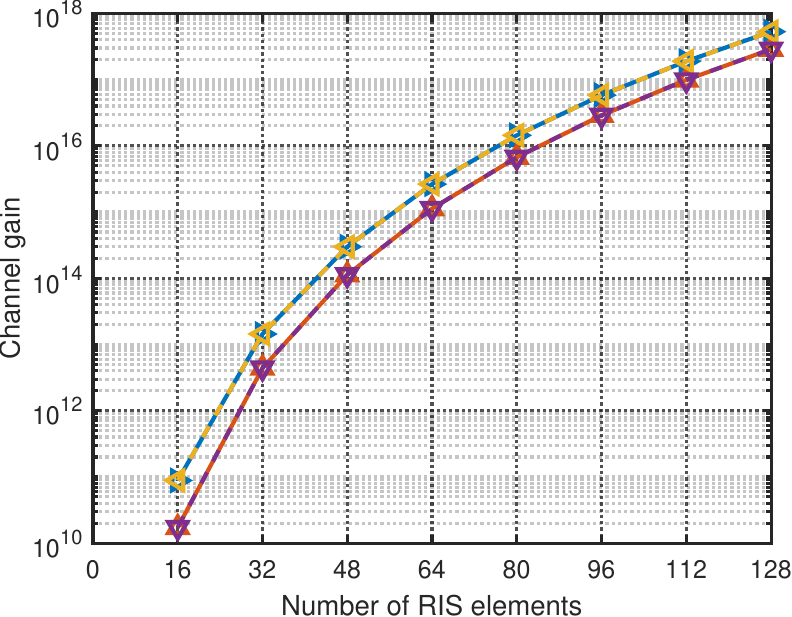}}
\caption{Average channel gain with unit path gain, i.e., $\Lambda=1$, of the physics-compliant model $\mathbf{H}$ and the widely used model $\mathbf{H}^\prime$, under LoS channels.}
\label{fig:LoS}
\end{figure}

We now quantify the performance degradation due to optimizing the RISs based on the widely used model.
By plugging the solution \eqref{eq:sol-wu1}-\eqref{eq:sol-wu3} (optimal for the widely used model) into the physics-compliant model \eqref{eq:Hlos}, we obtain the average suboptimal channel gain as
\begin{align}
\text{E}\left[\left\Vert\mathbf{H}^{\text{Sub}}\right\Vert^2\right]
&=\Lambda^2\text{E}\left[\left\vert c+N_I\right\vert^2\right]^LN_RN_T\\
&=\Lambda^2\text{E}\left[\left\vert c\right\vert^2+N_I^2+2N_I\Re\{c\}\right]^LN_RN_T,
\end{align}
where $c\sim\mathcal{CN}(0,N_I)$.
Consequently, noticing that $\text{E}[\vert c\vert^2]=N_I$ and $\text{E}[\Re\{c\}]=0$, we obtain
\begin{equation}
\text{E}\left[\left\Vert\mathbf{H}^{\text{Sub}}\right\Vert^2\right]
=\Lambda^2\left(N_I^2+N_I\right)^LN_RN_T,\label{eq:EGsub}
\end{equation}
and, by substituting \eqref{eq:EG} and \eqref{eq:EGsub} into \eqref{eq:rho1}, we obtain the normalized channel gain under \gls{los} channels as
\begin{equation}
\rho=\left(\frac{N_I+1}{N_I+\sqrt{\pi N_I}+1}\right)^L.\label{eq:rho2}
\end{equation}
In Fig.~\ref{fig:LoS-gain}, we report the theoretical normalized gain $\rho$ given in \eqref{eq:rho2} together with the simulated one, and we observe that they are equivalent.
In addition, we notice that the normalized gain decreases with $L$ since the discrepancy between the physics-compliant and the widely used model increases with $L$.
With $L=4$ RISs, the normalized gain is 25\% when $N_I=16$ and 56\% when $N_I=128$, showing an important degradation when optimizing the RISs based on the widely used model.

\begin{figure}[t]
\centering
\includegraphics[width=0.38\textwidth]{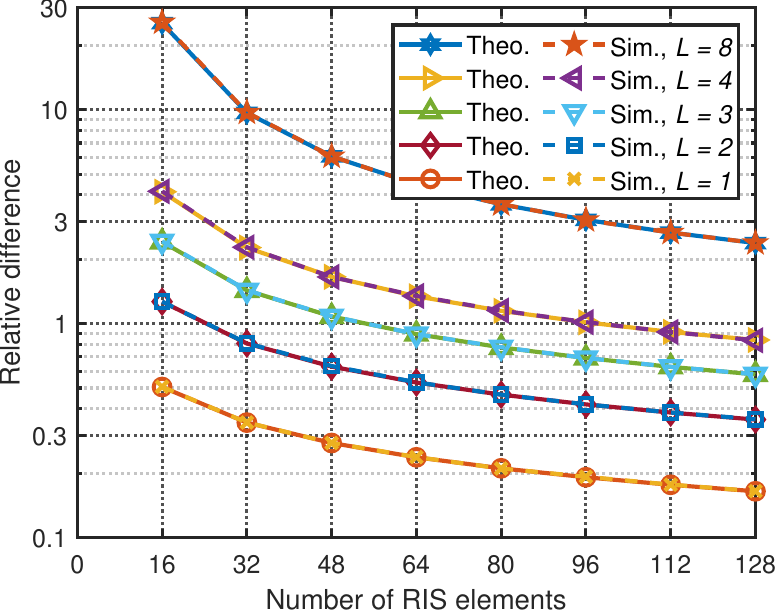}
\caption{Relative difference $\eta$ between the average channel gain of the physics-compliant model $\mathbf{H}$ and the widely used model $\mathbf{H}^\prime$, under LoS channels.}
\label{fig:LoS-diff}
\end{figure}
\begin{figure}[t]
\centering
\includegraphics[width=0.38\textwidth]{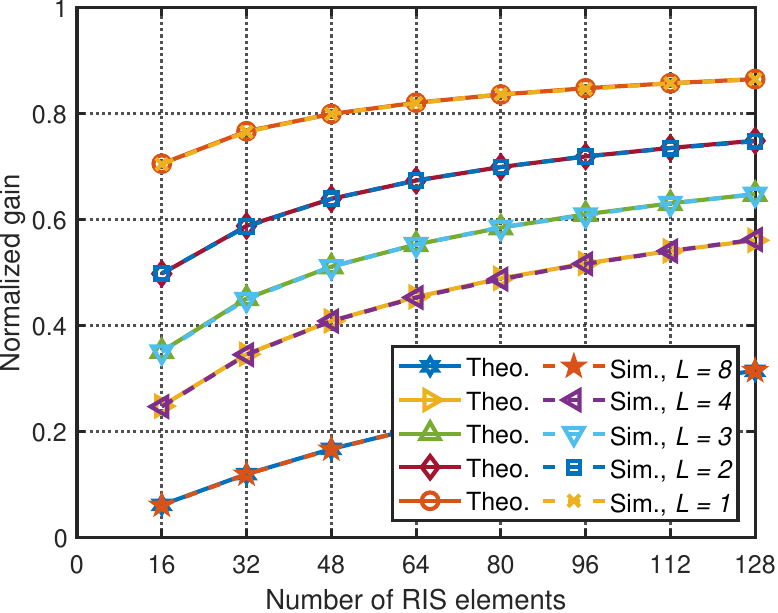}
\caption{Normalized gain $\rho$ of the physics-compliant model $\mathbf{H}$ obtained by optimizing the RISs based on the widely used model $\mathbf{H}^\prime$, under LoS channels.}
\label{fig:LoS-gain}
\end{figure}

\subsection{Multipath Channels}

In Fig.~\ref{fig:multipath}, we report the channel gains of the physics-compliant and widely used channel models, under Rayleigh channels with unit path gain.
For both the physics-compliant and widely used channel models we report: \textit{i)} the channel gain upper bounds given by \eqref{eq:UB1}-\eqref{eq:UB4} and \eqref{eq:UBprime}, respectively, which serve as a benchmark for the proposed optimization algorithm, \textit{ii)} the channel gains obtained by optimizing BD-RISs with Alg.~\ref{alg:opt}, where the tree-connected BD-RIS architecture is considered \cite{ner23-1}, and \textit{iii)} the channel gains obtained by optimizing D-RISs with Alg.~\ref{alg:opt}.
We make the following three observations.
\textit{First}, the physics-compliant model results in a higher channel gain than the widely used model, as observed for \gls{los} channels in Fig.~\ref{fig:LoS}.
\textit{Second}, BD-RIS allows to reach a higher channel gain than D-RIS, given its additional flexibility \cite{she20,ner23-1}.
\textit{Third}, the channel gain upper bound for the widely used model in \eqref{eq:UBprime} is tight and is achieved by optimizing BD-RISs, while the upper bound on the physics-compliant model is not tight due to the sub-optimality of the optimization algorithm Alg.~\ref{alg:opt}.

\begin{figure}[t]
\centering
\subfigure[$L=2$]{
\includegraphics[width=0.23\textwidth]{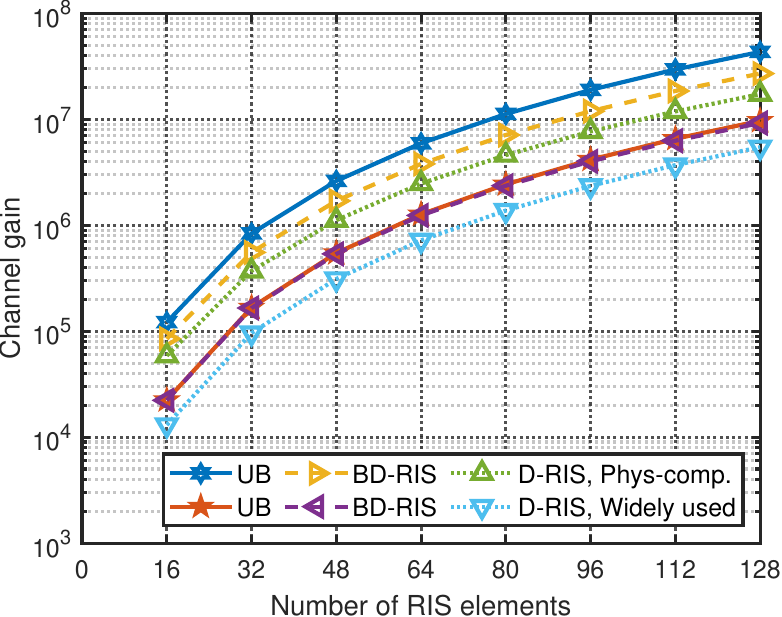}}
\subfigure[$L=4$]{
\includegraphics[width=0.23\textwidth]{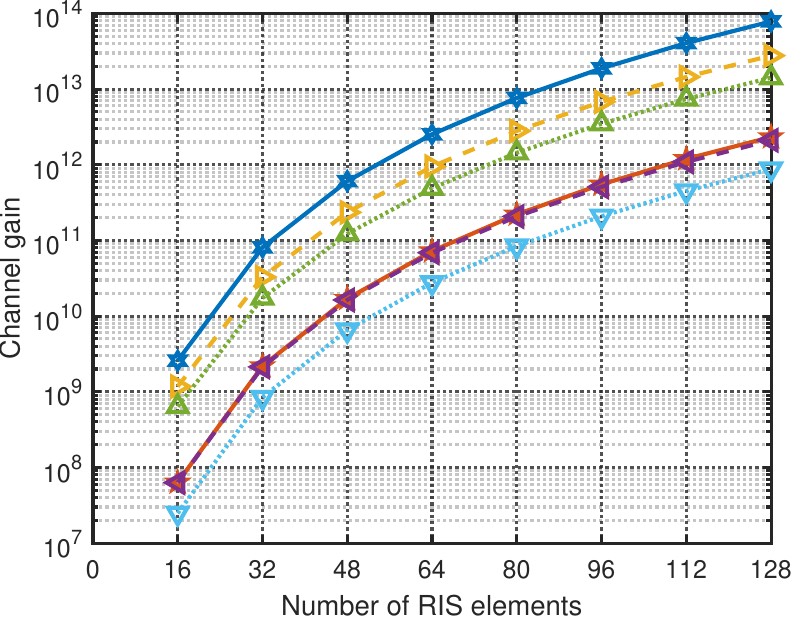}}
\caption{Average channel gain with unit path gain, i.e., $\Lambda=1$, of the physics-compliant model $\mathbf{H}$ and the widely used model $\mathbf{H}^\prime$, under Rayleigh channels.}
\label{fig:multipath}
\end{figure}

In Fig.~\ref{fig:multipath-diff}, we report the simulated relative difference $\eta$ under Rayleigh channels, in the presence of BD-RISs and D-RISs.
Similarly to what was observed in Fig.~\ref{fig:LoS-diff} for \gls{los} channels, we notice that the relative difference decreases with $N_I$ and increases with $L$, in both BD-RIS and D-RIS aided systems.
Interestingly, this relative difference is non-negligible, confirming the need for accurate models of multi-RIS aided wireless channels.
For example, considering $L=4$ RISs with $N_I=128$ elements each, the relative difference is higher than 1000\% in both BD-RIS and D-RIS aided systems.

In Fig.~\ref{fig:multipath-gain}, we report the simulated normalized gain $\rho$ under Rayleigh channels.
As noticed for \gls{los} channels in Fig.~\ref{fig:LoS-gain}, we observe that the normalized gain decreases with $L$.
In the case of a system aided by $L=4$ D-RISs with $N_I=128$ elements each, the normalized gain is 0.07, indicating that only 7\% of the maximum channel gain can be achieved by optimizing D-RISs based on the widely used channel model.

In Fig.~\ref{fig:multipath-Rician-diff}, we report the relative difference between the channel gain with the physics-compliant and the widely used models, under Rician channels for different values of Rician factor $K$, to verify how the relative difference varies with the level of multipath richness.
Interestingly, the relative difference increases as the Rician factor $K$ decreases, for both $L=2$ and $L=4$, and for both BD-RIS and D-RIS aided systems.
This means that the impact of the specular reflection caused by the structural scattering on the end-to-end channel is lower under rank-1 \gls{los} channels and higher in the presence of multipath channels.
In the following, we provide a physical and a mathematical explanation for this observation.

From the physical perspective, with rank-1 \gls{los} channels, the RISs are connected by wireless channel matrices having only one useful eigenmode, i.e., corresponding to the only non-zero channel eigenvalue.
Thus, the end-to-end channel is impacted only by the component of the specular reflection at each RIS aligned with such an eigenmode.
Conversely, under multipath propagation, the RISs are connected by wireless channel matrices having multiple useful eigenmodes, as their rank is greater than 1.
In this case, all the components of the specular reflection at each RIS aligned with these eigenmodes are captured in the end-to-end channel, increasing the impact of the specular reflections.
Mathematically, we justify this observation by rigorously analyzing the cascaded structural scattering term in a multi-RIS aided \gls{siso} channel with $L=2$ RISs.
To this end, consider the multi-RIS aided channel with $L=2$ RISs given in \eqref{eq:H2}, where the term accounting for the cascade of the structural scattering of both RISs is $\mathbf{h}_{RI,2}\mathbf{H}_{2,1}\mathbf{h}_{IT,1}$ in the \gls{siso} case.
Thus, the average structural scattering strength, normalized by the widely used channel gain upper bound, writes as
\begin{align}
s
=&\frac{\text{E}\left[\left\vert\mathbf{h}_{RI,2}\mathbf{H}_{2,1}\mathbf{h}_{IT,1}\right\vert^2\right]}
{\text{E}\left[\left\Vert\mathbf{h}_{RI,2}\right\Vert^2\left\Vert\mathbf{H}_{2,1}\right\Vert^2\left\Vert\mathbf{h}_{IT,1}\right\Vert^2\right]}.
\label{eq:s1}
\end{align}
Assuming that $\text{E}[\Re\{([\mathbf{h}_{RI,2}]_{i}[\mathbf{H}_{2,1}]_{i,j}[\mathbf{h}_{IT,1}]_{j})$ $\times$ $
([\mathbf{h}_{RI,2}]_{i^\prime}[\mathbf{H}_{2,1}]_{i^\prime,j^\prime}[\mathbf{h}_{IT,1}]_{j^\prime})^*\}]=0$, for $(i,j)\neq(i^\prime,j^\prime)$, \eqref{eq:s1} simplifies as
\begin{align}
s
&=\sum_{i,j=1}^{N_I}\frac{\text{E}\left[\left\vert\left[\mathbf{h}_{RI,2}\right]_{i}\left[\mathbf{H}_{2,1}\right]_{i,j}\left[\mathbf{h}_{IT,1}\right]_{j}\right\vert^2\right]}
{\text{E}\left[\left\Vert\mathbf{h}_{RI,2}\right\Vert^2\left\Vert\mathbf{H}_{2,1}\right\Vert^2\left\Vert\mathbf{h}_{IT,1}\right\Vert^2\right]}.
\label{eq:s2}
\end{align}
In addition, assuming that $\mathbf{h}_{RI,2}$, $\mathbf{H}_{2,1}$, and $\mathbf{h}_{IT,1}$ are independent, and that the entries of $\mathbf{h}_{RI,2}$ are identically distributed as well as the entries of $\mathbf{h}_{IT,1}$, \eqref{eq:s2} can be rewritten as
\begin{equation}
s=\frac{1}{N_I^2}\frac{\text{E}\left[\left\Vert\mathbf{H}_{2,1}\right\Vert_F^2\right]}{\text{E}\left[\left\Vert\mathbf{H}_{2,1}\right\Vert^2\right]}.\label{eq:s3}
\end{equation}
Denoting as $\sigma_{n_I}$ the $n_I$th singular value of $\mathbf{H}_{2,1}$, and introducing $\bar{\lambda}_{n_I}=\text{E}[\sigma_{n_I}^2]$ for ${n_I}=1,\ldots,N_I$, we can write $\text{E}[\Vert\mathbf{H}_{2,1}\Vert_F^2]=\sum_{{n_I}=1}^{N_I}\bar{\lambda}_{n_I}$ and $\text{E}[\Vert\mathbf{H}_{2,1}\Vert^2]=\bar{\lambda}_1$, yielding
\begin{equation}
s=\frac{1}{N_I^2}\left(1+\sum_{{n_I}=2}^{N_I}\frac{\bar{\lambda}_{n_I}}{\bar{\lambda}_1}\right).\label{eq:s4}
\end{equation}
Remarkably, \eqref{eq:s4} is minimum under \gls{los} channels, being $s=1/N_I^2$ since $\bar{\lambda}_{n_I}/\bar{\lambda}_1=0$, for ${n_I}=2,\ldots,N_I$, in this case.
Besides, the terms $\bar{\lambda}_{n_I}/\bar{\lambda}_1$ increase with the level of multipath richness of the channel $\mathbf{H}_{2,1}$, leading to an increase in the structural scattering strength $s$ as the Rician factor $K$ decreases.
For this reason, we observe that the relative difference between physics-compliant and widely used models increases as the Rician factor $K$ decreases in Fig.~\ref{fig:multipath-Rician-diff}.

\begin{figure}[t]
\centering
\includegraphics[width=0.38\textwidth]{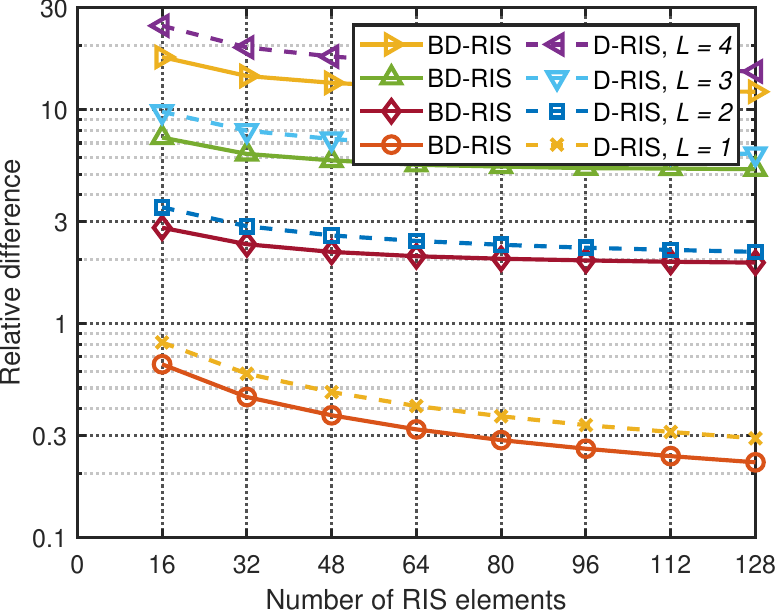}
\caption{Relative difference $\eta$ between the average channel gain of the physics-compliant model $\mathbf{H}$ and the widely used model $\mathbf{H}^\prime$, under Rayleigh channels.}
\label{fig:multipath-diff}
\end{figure}
\begin{figure}[t]
\centering
\includegraphics[width=0.38\textwidth]{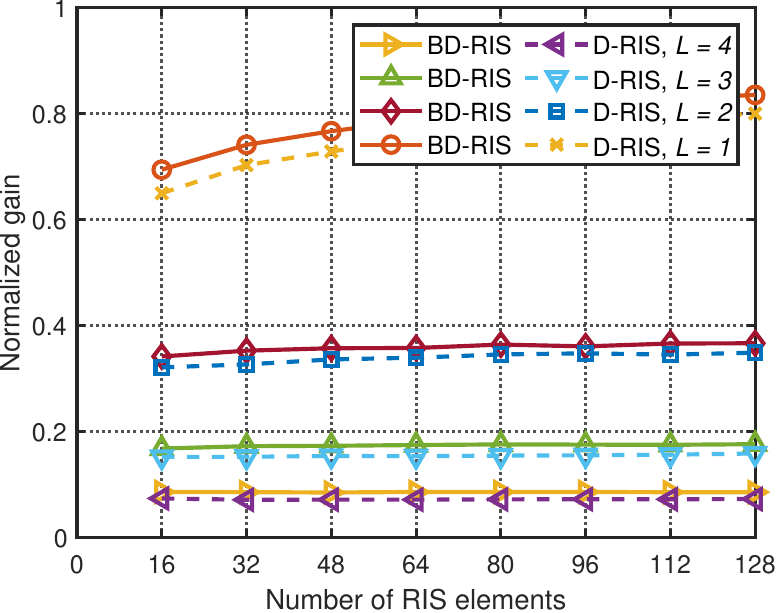}
\caption{Normalized gain $\rho$ of the physics-compliant model $\mathbf{H}$ obtained by optimizing the RISs based on the widely used model $\mathbf{H}^\prime$, under Rayleigh channels.}
\label{fig:multipath-gain}
\end{figure}

\begin{figure*}[t]
\centering
\subfigure[$L=2$]{
\includegraphics[width=0.38\textwidth]{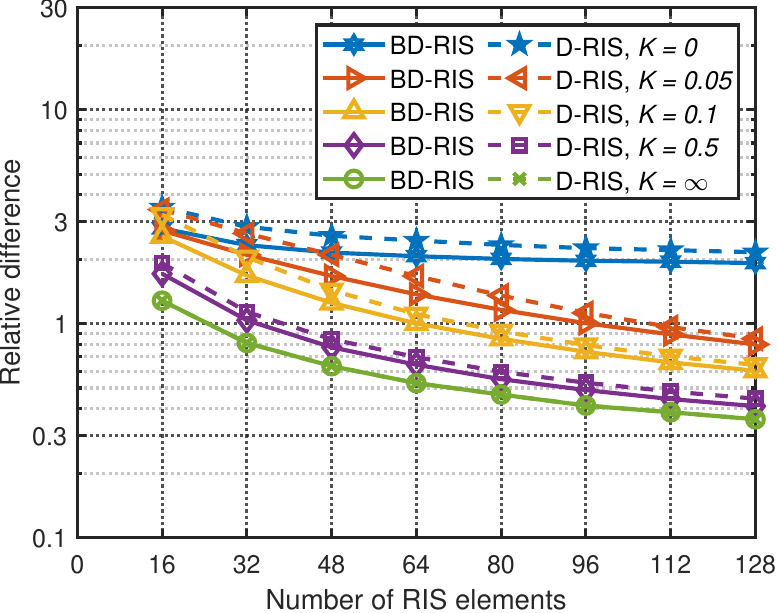}}
\subfigure[$L=4$]{
\includegraphics[width=0.38\textwidth]{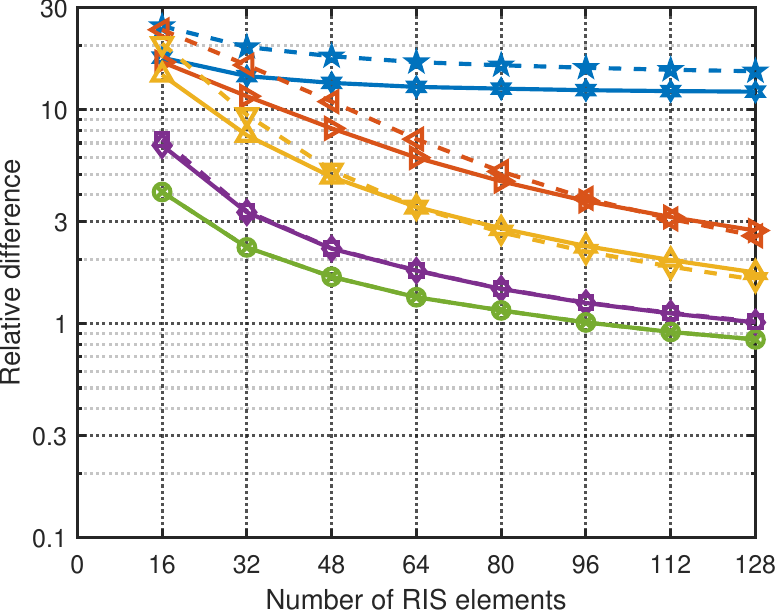}}
\caption{Relative difference $\eta$ between the average channel gain of the physics-compliant model $\mathbf{H}$ and the widely used model $\mathbf{H}^\prime$, under Rician channels.}
\label{fig:multipath-Rician-diff}
\end{figure*}
\begin{figure*}[t]
\centering
\subfigure[$L=2$]{
\includegraphics[width=0.38\textwidth]{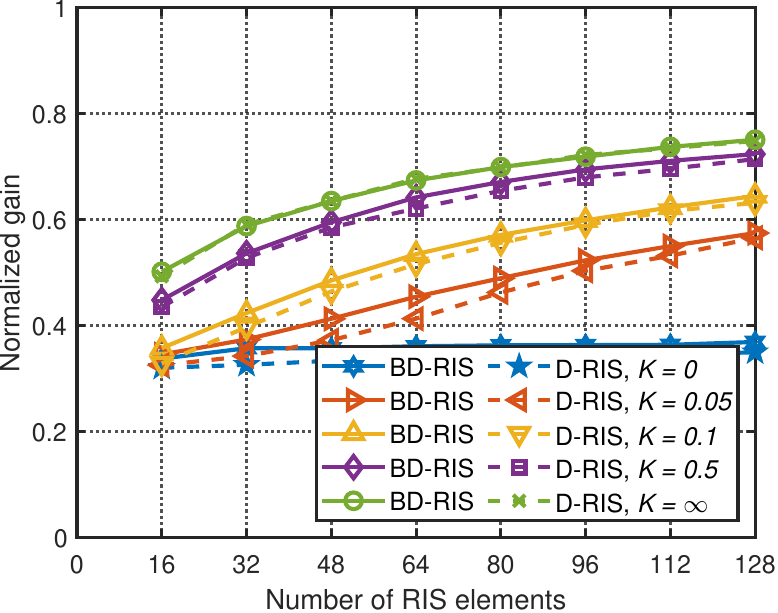}}
\subfigure[$L=4$]{
\includegraphics[width=0.38\textwidth]{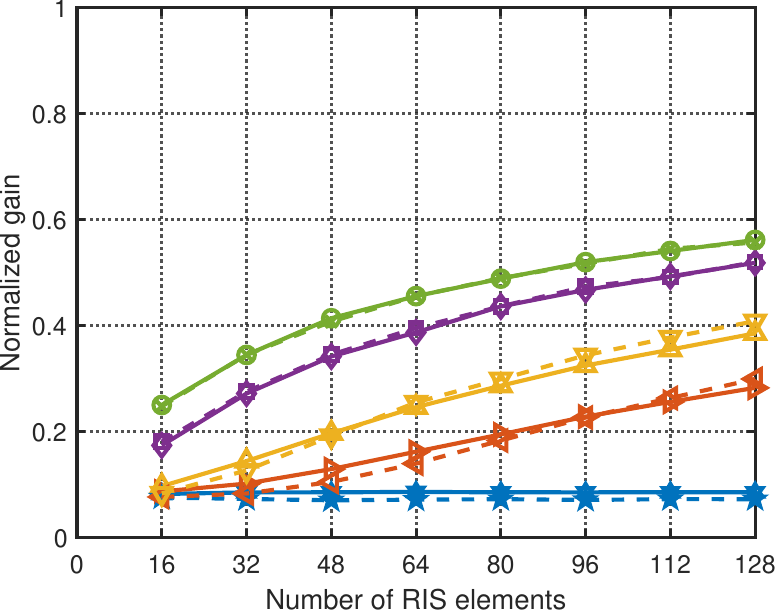}}
\caption{Normalized gain $\rho$ of the physics-compliant model $\mathbf{H}$ obtained by optimizing the RISs based on the widely used model $\mathbf{H}^\prime$, under Rician channels.}
\label{fig:multipath-Rician-gain}
\end{figure*}

Finally, in Fig.~\ref{fig:multipath-Rician-gain}, we show how the normalized gain varies with the multipath richness of the channels by reporting the normalized gain obtained under Rician channels.
The normalized gain of the physics-compliant model obtained by optimizing the RISs based on the widely used model decreases with the Rician factor $K$, consistently with the fact that the relative difference between the two models increases as the Rician factor $K$ decreases, as observed in Fig.~\ref{fig:multipath-Rician-diff}.


\section{Conclusion}
\label{sec:conclusion}

We present a physics-compliant channel model for multi-RIS aided systems that accurately characterizes multi-hop communication links enabled by multiple cooperative RISs.
The obtained model, derived through rigorous multiport network theory, aligns with the model widely used in related literature, while accounting for the structural scattering of the RISs, commonly neglected in the literature.
We show that reflective RISs exhibit structural scattering impacting the channel model, whereas transmissive RISs do not.

Since the physics-compliant channel model and the widely used one are different in the presence of reflective RISs, we compare them considering \gls{los} as well as multipath channels.
Under \gls{los} channels, we characterize the scaling laws of the channel gains for the two models.
Considering multipath channels, we maximize the channel gains for the two models by optimizing D-RISs and BD-RISs, and propose closed-form upper bounds on their channel gains.
Theoretical derivations, corroborated by numerical simulations, show that the physics-compliant channel gain substantially differs from the widely used one, and that their relative difference increases with the number of RISs in the system.
Furthermore, this discrepancy is more pronounced with multipath channels than with \gls{los} channels.
In a system aided by four 128-element RISs with \gls{los} channels, if the RISs are optimized using the widely used model and their solutions are applied to the physics-compliant channel model, it can be achieved only 56\% of the maximum achievable channel gain.
This value becomes only 7\% in the presence of multipath channels.

While this study provides a theoretical analysis and guidance for the deployment of multi-RIS systems, several avenues for further research can be identified.
Given the important gap between the physics-compliant channel model and the widely used model, all system design choices should be carried out by considering the physics-compliant model.
Potentially, all the problems of multi-RIS systems addressed in previous literature using the widely used channel model should be reconsidered using the physics-compliant model.
Examples of such problems include, but are not limited to: RIS reconfiguration, RIS placement, channel estimation, and beam routing, i.e., selection of the RISs involved in the signal reflection among all those available in the environment.
In addition, further research is needed to practically implement multi-RIS systems and experimentally validate the performance of RIS.

\bibliographystyle{IEEEtran}
\bibliography{IEEEabrv,main}

\end{document}